\titleformat{\section}[block]{\large\scshape\centering}{\thesection.}{1em}{} % Change the look of the section titles
\titleformat{\subsection}[block]{\large}{\thesubsection.}{1em}{} % Change the look of the section titles
\numberwithin{equation}{section}
\newcommand{\zeroped}[1]{{\scriptscriptstyle #1}}
\newtheorem{Lems}{Lemma}[section]
\newtheorem{prop}{Proposition}[section]
\newtheorem{theor}{Theorem}[section]
\title{\vspace{-20mm}\fontsize{11pt}{10pt}\selectfont\textbf{ EFFECTS OF CONCAVITY ON THE MOTION OF A BODY IMMERSED IN A VLASOV GAS }} % Article title
\author{
 \Large
 \textsc{Francesco Sisti\footnote{ Dip. di Scienze di Base e Applicate per l'Ingegneria, ''Sapienza'' Universit\'a di Roma , Via A. Scarpa 16, 00161 Roma, Italy. Email: francesco.sisti@sbai.uniroma1.it},  \, Costantino Ricciuti\footnote{Dip. di Scienze Statistiche, ''Sapienza'' Universit\'a di Roma , Piazzale Aldo Moro 5, 00185 Roma, Italy. Email: costantino.ricciuti@uniroma1.it}     } \vspace{-5mm}
 }
\date{}
\begin{document}

\maketitle

\begin{abstract}

We consider a body immersed in a perfect gas, moving under the action of a constant force $E$ along the 
$ x $ axis . 
We assume the gas to be described by the mean-field approximation and interacting elastically with the body.
Such a dynamic was studied in \cite{nostro}, \cite{E=0} and \cite{CONVEX}.
In these studies the asymptotic trend showed no sensitivity whatsoever to the shape of the object moving through the gas.
In this work we investigate how a simple concavity in the shape of the body can affect its asymptotic behavior; we thus consider the case of hollow cylinder in three dimensions or a box-like body in two dimensions. 
We study the approach of the body velocity $V (t) $ to the limiting velocity $V_{\infty} $ and prove that, under suitable smallness assumptions, the approach to equilibrium is 
$| V_{\infty}-V(t)| \approx C  t^{-3}  $ both in two or three dimensions, being $C$ a positive constant. This approach is not exponential, as typical in friction problems, and even slower
 than  for the simple disk and the convex body in $ \mathbb{R}^2 $ or  $ \mathbb{R}^3 $.\\
\\
\emph{Keywords : Viscous friction; microscopic dynamics; dynamics with memory. }
\\
\\
  \emph{AMS Subject Classification: 70F40, 70K99, 34C11}
\end{abstract}

\tableofcontents 
\section{Introduction}

Let us consider a body of mass $ M $ moving with velocity $ V(t) $ along the x axis under the action of a constant force $ E $, immersed in a homogeneous fluid.
Its time evolution is given by
\begin{equation}
\label{intro V = G(V) +E}
M \dot{V}(t) = -G(V) +E 
\end{equation}
where $ G(V) $ represents the friction term.
This term is usually determined on the basis of simple phenomenological considerations and hence often assumed positive and increasing, there is thus only one stationary solution 
$ V_{\infty} $ to the equation
\begin{equation}
G(V_{\infty}) = E,
\end{equation}
and the solution to (\ref{intro V = G(V) +E}) converges exponentially to this limiting velocity.

In \cite{nostro} a model of free gas of light particles elastically interacting with a simple shaped body (a stick in two dimensions or a disk in three dimensions) is studied  and it is proven that the exponential trend to the limiting velocity is not the most general one, on the contrary the asymptotic time behavior in approaching $ V_{\infty} $ is power-law.
More precisely, assuming the initial velocity $ V_0 $ such that $ V_{\infty}-V_0 $ is positive and small, it was proven that:
\begin{equation}
\label{intro trend d+2}
| V_{\infty}-V(t)| \approx \frac{C}{t^{d+2}}\, ,
\end{equation}
where $ d=1 $, 2, 3 is the dimension of the physical space and $ C $ is a constant, depending on the medium and on the shape of the obstacle.
This result, surprising for not being exponential, is due to re-collisions that can occur between gas particles and the body while it is accelerating.
Moreover, as already stressed in \cite{nostro}, these re-collisions can take place after arbitrarily large time, creating a long tail memory which in the end is responsible for the power law decay.
We incidentally remark that the physical model in \cite{nostro}  has been previously introduced in connection with the so-called piston problem (\cite{piston}).
Later articles (\cite{E=0}) studied the problem in absence of the external force, proving the same power of decay.
Similar model (\cite{diffusivo} )  have been studied where a stochastic kind of interaction between the gas and the body is assumed: when a particle of the medium hits the body it is  absorbed and immediately stochastically re-emitted with a Maxwellian distribution centered  around the body velocity, in this case the behaviour was found to be $ O( \frac{1}{ t^{ d+1} } ) $

The physical reason of this different behaviour is due to the fact that  particles are assumed to be re-emitted with a Maxwellian distribution centered  around the body velocity, therefore a large fraction of the emitted particles have a velocity close to that of the body and this makes recollisions more likely.
More recent works (\cite{strauss},\cite{STRAUSS-2} ) studied a mixed  case where it is assumed that some of the particles that collide with a cylinder-like body reflect elastically,  while others reflect stochastically  with some probability distribution $ K $.
Here the rate of approach of the body to equilibrium is $ O(\frac{1}{ t^{ 3+p} })  $ in three dimensions where $p$ can take any value from 0 to 2, depending on $  K $.

% In the domain of pure elastic collision it is clear that, from (\ref{intro trend d+2}) the trend is sensitive to the space dimension where the dynamic is taking place, in particular in three dimensions the behavior is $ t^{-5} $ like.

Now, since the friction term is due to the interaction between the gas and the object, the question arose of whether the trend of the solution had any connection whatsoever with the simple shape chosen for the object in the above mentioned articles, in particular in the elastic interaction models where gas particles bouncing away from the body keep a stronger track of its shape.

This issue was firstly  faced in \cite{CONVEX}, in the domain of elastic collisions, where it was studied the evolution of a general convex body, 
which  was more delicate to handle and deserved its own analysis, indeed it was shown that
the shape itself was responsible for a change in the coefficient appearing in the upper bound for the velocity $ V(t) $; nevertheless it was confirmed the  same power expressed in (\ref{intro trend d+2}) and this was mainly due to an important feature that the convex body  shares with the first case of a simple disk , namely in both cases colliding gas particles bounce away from the body.

In physically realistic situations it would be desirable to deal with a non ideally smooth shape, one that 
could give rise to possible trapping effects between the body and the gas, indeed as stressed before, what leads to the algebraic decay in the evolution are recollisions and they come essentially from the iterated action between gas particles and the object.

In the present work we removed the hypothesis of convexity and studied the case of a body with lateral barriers of finite length, namely a hollow cylinder in three dimensions or a box-like object in two dimensions  (the case of one dimension had clearly no interest in our analysis) which interacts elastically with the body and moves in a homogeneous fluid with velocity $ V(t) $ along the x axis under the action of a constant external force.

The gas is assumed to be made of free particles (see \cite{CERCIGNANI} on Knudsen gas) elastically interacting with the body and it is studied in the mean field approximation, that is the limit in which the mass of the particles constituting the free gas goes to zero, while the number of particles per unit volume diverges, in such a way that the mass density stays finite. We will make explicit use of this condition in section \ref{Section our problem}.

We prove that both in $  d=2 $ and $ d=3 $ dimensions, if the initial velocity of the body is sufficiently close to its limiting velocity $ V_{\infty} $, then for large $ t $, 
\begin{equation}
\label{intro trend 3}
| V_{\infty}-V(t)| \approx \frac{C}{t^{3}}\, ,
\end{equation}
where $ C $ is a constant, depending on the medium and on the shape of the obstacle.

This result is somehow surprising as, according to the quoted literature, it is the first case 
in which the power of decay is $ t^{-3} $ like and doesn't change from two to three dimensions.

Roughly, the reason of this result is that in the long run barriers retain particles responsible for a further  thrust on the disk ( particles otherwise free to escape, see section
 (\ref{SECTION Computation Recollision terms})  ) and, as shown in the main proof, this layer of gas has an effect that is non transient.
 
From this point of view it is no accident that the trend we found is one dimensional like ($ d=1 $) with respect to (\ref{intro trend d+2}) : in one dimension the gas is constrained to stay in front of the body during the whole evolution; in our case, no matter what the dimension, the simple concavity of the body seems to  act as the same topological constraint.

As already mentioned our technique as that of the quoted articles is perturbative in the sense that we take the parameter $ \gamma = V_{\infty}-V_0\, $ finite but sufficiently small, on the other hand in \cite{NUMERICO 1} and \cite{NUMERICO 2} these models have been numerically studied, in particular they computed the dynamic with stochastic interaction carried out in \cite{diffusivo} for a disk subjected to an harmonic force confirming the analytical results (for analytical studies see \cite{E=0} and references quoted therein), here they removed the hypothesis of initial velocity close to $ V_{\infty} $, (though still positive)   showing that this doesn't affect the dynamic.

We chose a simple shape for the object in order to focus the attention on what we considered the first important feature of a concave body, i.e. possible trapping effects between the object and the gas, indeed this feature turned out to yield the surprising results we have been discussing in this brief introduction.

%%%%%%%%%%%%%%%%%%%%%%%%%%%%%%%%%%%%%%%%%%%%%%  section 2   %%%%%%%%%%%%%%%%%%%%%%%%%%%%%%%%%%%%
%%%%%%%%%%%%%%%%%%%%%%%%%%%%%%%%%%%%%%%%%%%%%%  section 2   %%%%%%%%%%%%%%%%%%%%%%%%%%%%%%%%%%%%
%%%%%%%%%%%%%%%%%%%%%%%%%%%%%%%%%%%%%%%%%%%%%%  section 2   %%%%%%%%%%%%%%%%%%%%%%%%%%%%%%%%%%%%
%%%%%%%%%%%%%%%%%%%%%%%%%%%%%%%%%%%%%%%%%%%%%%  section 2   %%%%%%%%%%%%%%%%%%%%%%%%%%%%%%%%%%%%
%%%%%%%%%%%%%%%%%%%%%%%%%%%%%%%%%%%%%%%%%%%%%%  section 2   %%%%%%%%%%%%%%%%%%%%%%%%%%%%%%%%%%%%
%%%%%%%%%%%%%%%%%%%%%%%%%%%%%%%%%%%%%%%%%%%%%%  section 2   %%%%%%%%%%%%%%%%%%%%%%%%%%%%%%%%%%%%
%%%%%%%%%%%%%%%%%%%%%%%%%%%%%%%%%%%%%%%%%%%%%%  section 2   %%%%%%%%%%%%%%%%%%%%%%%%%%%%%%%%%%%%
%%%%%%%%%%%%%%%%%%%%%%%%%%%%%%%%%%%%%%%%%%%%%%  section 2   %%%%%%%%%%%%%%%%%%%%%%%%%%%%%%%%%%%%
\section{The model }

\subsection{Main features}
\label{Section our problem}

We consider a disk of radius $ R $ with lateral barriers of width $ h $, in dimension $ d = 3 $, namely an hollow  cylinder of height $ h $ without frontal base, subjected to a constant external force $E$ along the x-axis.
The thickness of the body is assumed negligible for sake of simplicity, though this assumption is not essential.
The cylinder is constrained to stay with its base orthogonal to the x axis,  with the center moving along the same axis and with the hollow base facing forwards.

The system is immersed in a perfect gas in equilibrium at temperature $T$ and with constant density 
$\rho $, assumed in the mean field approximation. (that is the limit in which the mass of the particles goes to zero,while the number of particles per unit volume diverges, so that the mass density stays finite.)

The presence of the moving body modifies the equilibrium of the gas which starts to evolve according to the free Vlasov equation.
Our aim is to investigate whether and how the body reaches a limiting velocity.

In what follows we will write  a general vector  $ (r_x,r_y,r_z) \in \mathbb{R}^3 $ as $ \bm{r} = (r_x,\bm{r_{\perp}}) $. 

 %%%
%%% 2 DIMENSIONI SICURO? MENTRE RILEGGO DARE OCCHIO SE NO TOGLIERE...  
 %%%
 
We will refer to our cylinder in space as $ C(t) $,
its bottom, namely the disk, as $ D(t) $ and its side as $ S(t) $ (Figure 1).
More precisely:
\begin{subequations}%%%%%%%%%%%%%%%%%%%%%%%%%%%%%%%%%%%%%%%%%%%%%%%%%%%%%%%%%%%%%%%%%%%%%%
\begin{gather}
C(t) =D(t) \cup S(t), \\
D(t) = \{ (x, \bm{x_{\perp} }) \in \mathbb{R}^3 : \quad x= X(t)  ;\quad | \bm{x_{\perp}} |  \leq R \}, \\
S(t) = \{ (x, \bm{x_{\perp} } ) \in \mathbb{R}^3 :  \quad  0 \leq  x- X(t)  \leq h ;\quad  | \bm{x_{\perp}} \rvert = R  \}
\end{gather} 
\end{subequations}%%%%%%%%%%%%%%%%%%%%%%%%%%%%%%%%%%%%%%%%%%%%%%%%%%%%%%%%%%%%%%%%%%%%%%%%%

Where $ X(t) $ is the position of the cylinder base along the x axis.

Let then $ f(\bm{x},\bm{v}, t) , \ (\bm{x},\bm{v}) \in \mathbb{R}^3 \times \mathbb{R}^3 $ 
be the mass density in the phase space of the gas particles.
It evolves according to the free Vlasov equation:
\begin{equation}%%%%%%%%%%%%%%%%%%%%%%%%%%%%%%%%%%%%%%%%%%%%%%%%%%%%%%%%%%%%%%%%%%%%%%
\label{vlasov}
(\partial_t +\bm{v} \cdot \nabla_{\bm{x}}) f(\bm{x},\bm{v}, t) =0, \qquad \bm{x} \notin C(t)
\end{equation}%%%%%%%%%%%%%%%%%%%%%%%%%%%%%%%%%%%%%%%%%%%%%%%%%%%%%%%%%%%%%%%%%%%%%%%%%

The two dimensional version of this body is a box of length $ 2R $ with barriers of width $ h $,
immersed in 2 dimensional gas with  mass density in the phase space 
$ f(\bm{x},\bm{v}, t) , \ (\bm{x},\bm{v}) \in \mathbb{R}^2 \times \mathbb{R}^2 $.  
For sake of concreteness we shall present the work for the three dimensional case, namely for the cylinder. The remaining case $ d = 2 $ follows by the same arguments with obvious modifications \footnote{For $ \bm{r} \in  \mathbb{R}^2 $ of course $ \bm{r_{\perp}}= r_y  $}.

Let $ \dot{X(t)} = V(t) $ be the velocity of the body and $ (x,\bm{x}_{\perp},v_x ,\bm{v}_{\perp}) $ the position and speed of a particle just before the collision with $ C(t) $ at time $ t $. In our model gas particles will be assumed to perform elastic collision with the body.
In  the case of a collision with the base $ D(t) $, 
$ ( x = X(t),|\bm{x}_{\perp}| \leq R)$, defining $ \bm{v}' =(v'_x ,\bm{v}'_{\perp} ) $ 
as the velocity of the particle after the impact we have :
\begin{subequations}
\label{v'}
\begin{gather}%%%%%%%%%%%%%%%%%%%%%%%%%%%%%%%%%%%%%%%%%%%%%%%%%%%%%%%%%%%%%%%%%%%%%%%%%
v'_x = 2V(t)-v_x  \, , \\
\bm{v}'_{\perp} = \bm{v}_{\perp}.
\end{gather}%%%%%%%%%%%%%%%%%%%%%%%%%%%%%%%%%%%%%%%%%%%%%%%%%%%%%%%%%%%%%%%%%%%%%%%%%
\end{subequations}

\begin{figure}

\includegraphics[scale=0.34]{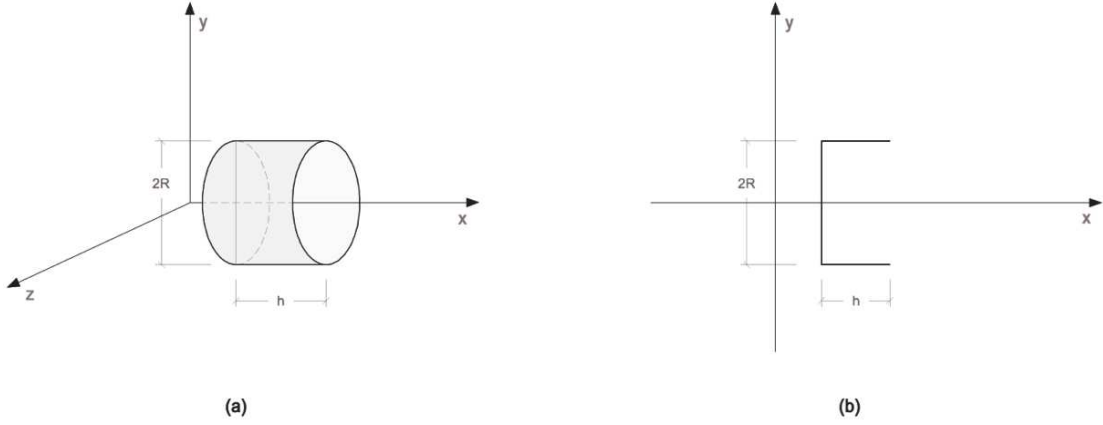}
\caption{ (a) Hollow cylinder of height $h$ ($d=3$), \,  (b) stick with lateral barriers of length $h$ ($d=2$). }
\end{figure}

In  the case of a collision with $ S(t) $, 
$ (  X(t) \leq x \leq X(t)+h , \, |\bm{x}_{\perp}| = R ) $, defining 
$ \widetilde{\bm{v}} =( \widetilde{v_x} , \widetilde{\bm{v}}_{\perp} ) $
as the velocity of the particle after the impact we have :
\begin{subequations}
\label{v tilde}
\begin{gather}
\widetilde{v_x}\,=\,v_x \, , \\
\widetilde{\bm{v}}_{\perp} = \bm{v}_{\perp} - 2 |\bm{v}_{\perp}|\,\cos(\,\theta( \bm{v}_{\perp},\bm{x}_{\perp} ))\hat{\bm{x}}_{\perp},
\end{gather}
\end{subequations}
where $ \theta(\bm{v}_{\perp},\bm{x}_{\perp} )  $ is the angle between $\bm{v}_{\perp} $ and $ \bm{x}_{\perp}$.
Along this work we will make extensive use of the norm condition on $\widetilde{\bm{v}}_{\perp} $ which can be easily deduced from the latter of (\ref{v tilde}):
\begin{equation}
\label{|vperp tilde|= |vperp|}
| \widetilde{\bm{v}}_{\perp} | = | \bm{v}_{\perp} | \, ;
\end{equation}
a  derivation of the foregoing results is carried out in Appendix \ref{appendix collision}.

Together with eq. (\ref{vlasov}) we consider the boundary conditions.
They express conservation of density along trajectories with elastic reflection on $ C(t)$. In particular they are :
\begin{gather}%%%%%%%%%%%%%%%%%%%%%%%%%%%%%%%%%%%%%%%%%%%%%%%%%%%%%%%%%%%%%%%%%%%%%%%%%
\label{boundary conditions}
f_+ (\bm{x},\bm{v}', t) = f_- (\bm{x},\bm{v}, t) \qquad \bm{x} \in D(t) \\
f_+ (\bm{x},\widetilde{\bm{v}}, t) = f_- (\bm{x},\bm{v}, t) \qquad \bm{x} \in S(t)
\end{gather}%%%%%%%%%%%%%%%%%%%%%%%%%%%%%%%%%%%%%%%%%%%%%%%%%%%%%%%%%%%%%%%%%%%%%%%%%
where
\begin{equation}%%%%%%%%%%%%%%%%%%%%%%%%%%%%%%%%%%%%%%%%%%%%%%%%%%%%%%%%%%%%%%%%%%%%%%%%%
f_{\pm}(\bm{x},\bm{v}, t)= \lim _{\epsilon \to 0^+ } f(\bm{x} \pm \epsilon \bm{v}, \bm{v},t \pm \epsilon).
\end{equation}%%%%%%%%%%%%%%%%%%%%%%%%%%%%%%%%%%%%%%%%%%%%%%%%%%%%%%%%%%%%%%%%%%%%%%%%%

Finally we give the initial state of the gas, assumed in thermal equilibrium through the Maxwell- Boltzmann distribution
\begin{equation}%%%%%%%%%%%%%%%%%%%%%%%%%%%%%%%%%%%%%%%%%%%%%%%%%%%%%%%%%%%%%%%%%%%%%%%%%
\label{distribuzione di Maxwell} 
f(\bm{x}, \bm{v},0)= f_0(\bm{v}^2)= \rho ( \frac{\beta}{\pi} ) ^{\frac{3}{2} } e^{- \beta \bm{v}^2} 
\end{equation}%%%%%%%%%%%%%%%%%%%%%%%%%%%%%%%%%%%%%%%%%%%%%%%%%%%%%%%%%%%%%%%%%%%%%%%%%
with $\beta=\frac{1}{kT}$, where $k$ is the Boltzmann constant.

The above equation for the gas is clearly coupled with those of the body immersed in it, which are:
\begin{subequations}%%%%%%%%%%%%%%%%%%%%%%%%%%%%%%%%%%%%%%%%%%%%%%%%%%%%%%%%%%%%%%%%%%%%%%%%%
\label{d/dt V = E- F(t)}
\begin{gather}
\frac{d}{dt} X(t) = V(t), \\
\frac{d}{dt} V(t) = E - F(t), \\
X(0) = 0, \qquad V(0) = V_0, 
\end{gather} 
\end{subequations}%%%%%%%%%%%%%%%%%%%%%%%%%%%%%%%%%%%%%%%%%%%%%%%%%%%%%%%%%%%%%%%%%%%%%%%%%
where $ E $ is the constant external force acting along the x-axis and
\begin{equation}%%%%%%%%%%%%%%%%%%%%%%%%%%%%%%%%%%%%%%%%%%%%%%%%%%%%%%%%%%%%%%%%%%%%%%%%%
\begin{split}
\label{F(t)}
F(t)= 2  \int \limits _ {|\bm{x}_{\perp}| \leq R} d \bm{x}_{\perp} \int \limits _{v_x<V(t)} d\bm{v}
( \, v_x-V(t) \, )^2 f_-(X(t),\bm{x}_{\perp},\bm{v},t) \\
-2  \int \limits _ {|\bm{x}_{\perp}| \leq R} d \bm{x}_{\perp} \int \limits _{v_x>V(t)} d\bm{v}
( \, v_x-V(t) \, )^2 f_-(X(t),\bm{x}_{\perp},\bm{v},t) 
\end{split}
\end{equation}%%%%%%%%%%%%%%%%%%%%%%%%%%%%%%%%%%%%%%%%%%%%%%%%%%%%%%%%%%%%%%%%%%%%%%%%%
is the action of the gas on the disk.

We give here a derivation of eq.(\ref{d/dt V = E- F(t)}) and (\ref{F(t)}),
for sake of simplicity we will denote $ V(t) $ simply as $ V $. \\
Our body, while moving, is subjected to multiples collision with gas particles, if we write its total variation of momentum in an interval $(t,t + \Delta t )$ along the $x$ axis as $ \Delta V(t)  $ and the variation of momentum due to collisions with gas particles as $ \Delta V_{coll}(t)  $, we have that ($M$ is the mass of the body) 
\begin{equation}%%%%%%%%%%%%%%%%%%%%%%%%%%%%%%%%%%%%%%%%%%%%%%%%%%%%%%%%%%%%%%%%%%%%%%%%%
M \Delta V(t) = E \Delta t +  M \Delta V_{coll}(t).
\end{equation}%%%%%%%%%%%%%%%%%%%%%%%%%%%%%%%%%%%%%%%%%%%%%%%%%%%%%%%%%%%%%%%%%%%%%%%%%

After one collision at time $ t $ between a particle of position and speed \\
$ (x,\bm{x}_{\perp},v_x ,\bm{v}_{\perp}) $  and the cylinder base\footnote{ We remember that collisions with $ S(t) $ don't affect momentum along $ x $} $ D(t) $  the change in momentum along x-axis is $ (2m / M ) ( \, v_x -V \, )  $ (see Appendix (\ref{appendix collision})).

The term $ \Delta V_{coll}(t) $ takes into account all the collisions happening during $ \Delta t $, thus:
\begin{equation}%%%%%%%%%%%%%%%%%%%%%%%%%%%%%%%%%%%%%%%%%%%%%%%%%%%%%%%%%%%%%%%%%%%%%%%%%
\label{DP_coll somme in k}
\Delta V_{coll}(t)= \frac{2m}{M}  \sum_{k} ( \, v^k_x -V \, ) +\alpha
\end{equation}%%%%%%%%%%%%%%%%%%%%%%%%%%%%%%%%%%%%%%%%%%%%%%%%%%%%%%%%%%%%%%%%%%%%%%%%%
where $ k $ labels all particles around the cylinder that are hitting the disk $ D(t) $ within 
$ \Delta t $, and $ \alpha $ denotes terms $ o(\Delta t)  $

Let $ \Delta \bm{x}^i \Delta \bm{v}^j $ be a volume of the phase space of measure
 $ \vert \Delta \bm{x}^i \Delta \bm{v}^j \rvert  = \Delta W $, centered at the point $ (\bm{x}^i,\bm{v}^j)$ and
 $\Delta N (\bm{x}^i,\bm{v}^j,t) $ the number of particles contained in it at time $ t $, so that :
\begin{gather}%%%%%%%%%%%%%%%%%%%%%%%%%%%%%%%%%%%%%%%%%%%%%%%%%%%%%%%%%%%%%%%%%%%%%%%%%
\frac{2m}{M} \sum_{k} ( \, v^k_x -V \, )=
\frac{2m}{M} \sum_{i j}( \, v^j_x -V \, ) \Delta N (\bm{x}^i,\bm{v}^j,t)  \notag \\
= \frac{2}{M} \sum_{i j}( \, v^j_x -V \,) \, m \, 
\frac{\Delta N (\bm{x}^i,\bm{v}^j,t) }{\Delta W } \Delta W, \notag
\end{gather}%%%%%%%%%%%%%%%%%%%%%%%%%%%%%%%%%%%%%%%%%%%%%%%%%%%%%%%%%%%%%%%%%%%%%%%%%
where $ i, j $  ranges over positions and velocities that will give rise to collision in 
$ (t, t+ \Delta t) $.
At this point letting $ \Delta W \to 0 $ the mean field approximation,whose meaning was mentioned in the introduction, guarantees the  convergence to a finite mass density, i.e :
\begin{equation}%%%%%%%%%%%%%%%%%%%%%%%%%%%%%%%%%%%%%%%%%%%%%%%%%%%%%%%%%%%%%%%%%%%%%%%%%
 \lim_{ \Delta W \to 0 } m \, \frac{\Delta N (\bm{x}^i,\bm{v}^j,t) }{\Delta W } = f(\bm x, \bm v,t),
\end{equation}%%%%%%%%%%%%%%%%%%%%%%%%%%%%%%%%%%%%%%%%%%%%%%%%%%%%%%%%%%%%%%%%%%%%%%%%%
so that we arrive to:
\begin{equation}%%%%%%%%%%%%%%%%%%%%%%%%%%%%%%%%%%%%%%%%%%%%%%%%%%%%%%%%%%%%%%%%%%%%%%%%%
\label{DP coll integrale}
\Delta V_{coll}(t) =\frac{2}{M} \int \limits _{\Omega(\Delta t)}( \, v_x -V \,)f(\bm x, \bm v,t) d\bm x d\bm v + \alpha
\end{equation}%%%%%%%%%%%%%%%%%%%%%%%%%%%%%%%%%%%%%%%%%%%%%%%%%%%%%%%%%%%%%%%%%%%%%%%%%
where $ \Omega(\Delta t) $ is the $ (\bm x, \bm v)$ region of particles hitting $ D(t) $ in $(t, t+ \Delta t)$;\, for further convenience we split this integral into frontal contribution to recoliision 
$  \Omega^+(\Delta t) $ and backward contribution $ \Omega^-(\Delta t) $ :
\begin{equation}%%%%%%%%%%%%%%%%%%%%%%%%%%%%%%%%%%%%%%%%%%%%%%%%%%%%%%%%%%%%%%%%%%%%%%%%%
\label{DP coll integrale}
\int \limits _{\Omega^+(\Delta t)}\frac{2}{M}( \, v_x -V \,)f(\bm x, \bm v,t) d\bm x d\bm v \,  +
\int \limits _{\Omega^-(\Delta t)}\frac{2}{M}( \, v_x -V \,)f(\bm x, \bm v,t) d\bm x d\bm v.
\end{equation}%%%%%%%%%%%%%%%%%%%%%%%%%%%%%%%%%%%%%%%%%%%%%%%%%%%%%%%%%%%%%%%%%%%%%%%%%

In order for a frontal recollision to happen within an interval of time $(t,t + \Delta t)  $ it is necessary that 
$ V -v_x \geq 0 $, then there must be a time
$   0 \leq \eta -t \leq  \Delta t $ such that the particle and the disk occupies the same position on the x-axis, i.e :
\begin{gather*}%%%%%%%%%%%%%%%%%%%%%%%%%%%%%%%%%%%%%%%%%%%%%%%%%%%%%%%%%%%%%%%%%%%%%%%%%
x + v_x(\eta -t) = X(\eta) = X(t) + V(\eta -t) + \alpha, \text{   \, or }  \\
x- X(t) = (V- v_x) (\eta -t) + \alpha,
\end{gather*}%%%%%%%%%%%%%%%%%%%%%%%%%%%%%%%%%%%%%%%%%%%%%%%%%%%%%%%%%%%%%%%%%%%%%%%%%

finally, among particles starting outside the barriers only those with \\
$ x -X(t)> h $ would be able to enter and hit $ D(t) $ but $ h $ is finite while the first condition implies  $ x -X(t) = O(\Delta t) $,
hence the last condition : $ | \bm{x}_{\perp} | < R $ .

Summarizing it all:
\begin{align}%%%%%%%%%%%%%%%%%%%%%%%%%%%%%%%%%%%%%%%%%%%%%%%%%%%%%%%%%%%%%%%%%%%%%%%%%
\label{OMEGA +}
\Omega & ^+  (\Delta t) = \{ (\bm x, \bm v) \in  \mathbb{R}^3 \times \mathbb{R}^3 : \ V -v_x \geq 0, \\   \notag  & 0 \leq x- X(t) \leq (V- v_x)\Delta t + \alpha , \ \,  | \bm{x}_{\perp} | < R \}.
\end{align}%%%%%%%%%%%%%%%%%%%%%%%%%%%%%%%%%%%%%%%%%%%%%%%%%%%%%%%%%%%%%%%%%%%%%%%%%

Therefore the first contribution is:
\begin{gather*}%%%%%%%%%%%%%%%%%%%%%%%%%%%%%%%%%%%%%%%%%%%%%%%%%%%%%%%%%%%%%%%%%%%%%%%%%
\int \limits _ {v_x<V} d\bm{v} \int \limits _{0}^{(V-v_x)\Delta t +\alpha} d( x-X(t) ) \notag
\int \limits _ {|\bm{x}_{\perp}|< R} d \bm{x}_{\perp} \frac{2}{M} (v_x-V) f(\bm{x},\bm{v},t)  \\
= -\int \limits _ {v_x<V} d\bm{v} \int \limits _ {|\bm{x}_{\perp}|< R} d \bm{x}_{\perp} \frac{2}{M} (v_x-V)^2 f_-( X(t),\bm{x}_{\perp},\bm{v},t) \Delta t \, +  \, \alpha  \\
= - \int \limits _ {|\bm{x}_{\perp}|< R} d \bm{x}_{\perp} \int \limits _ {v_x<V} d\bm{v} \frac{2}{M} (v_x-V)^2 f_-( X(t),\bm{x}_{\perp},\bm{v},t) \Delta t \, +  \, \alpha . 
\end{gather*}%%%%%%%%%%%%%%%%%%%%%%%%%%%%%%%%%%%%%%%%%%%%%%%%%%%%%%%%%%%%%%%%%%%%%%%%%

Regarding backward recollisions, it is necessary that $ V -v_x \leq 0 $ and again there must be a time $   0 \leq \eta -t \leq  \Delta t $ such that 
\begin{gather}%%%%%%%%%%%%%%%%%%%%%%%%%%%%%%%%%%%%%%%%%%%%%%%%%%%%%%%%%%%%%%%%%%%%%%%%%
x- X(t) = -(v_x- V) (\eta -t) + \alpha,
\end{gather}
In this case though there aren't barriers that guarantee condition on $ \bm{x}_{\perp} $ inside (\ref{OMEGA +}) to hold; thus the particle must be on the surface of the disk at the impact time 
$ \eta $,\, i.e ( we write $ (\eta -t) =\delta_t $ remembering that $ \delta_t = \delta_t(x,v_x)  $ ):
\begin{gather}
|\bm{x}_{ \perp} + \bm{v}_{ \perp} \delta_t(x,v_x) |< R \,; \\ 
\delta_t = \frac{x-X(t)}{V- v_x} +\alpha \leq \Delta t;
\end{gather}%%%%%%%%%%%%%%%%%%%%%%%%%%%%%%%%%%%%%%%%%%%%%%%%%%%%%%%%%%%%%%%%%%%%%%%%%

Summarizing it all:
\begin{align}%%%%%%%%%%%%%%%%%%%%%%%%%%%%%%%%%%%%%%%%%%%%%%%%%%%%%%%%%%%%%%%%%%%%%%%%%
\label{OMEGA -}
\Omega & ^-  (\Delta t) = \{ (\bm x, \bm v) \in  \mathbb{R}^3 \times \mathbb{R}^3 : \ V -v_x \leq 0, \\          & \notag  - (v_x- V)\Delta t + \alpha \leq x- X(t) \leq 0 , \ \,  
| \bm{x}_{\perp}+ \bm{v}_{ \perp} \delta_t(x,v_x)  | < R \}
\end{align}%%%%%%%%%%%%%%%%%%%%%%%%%%%%%%%%%%%%%%%%%%%%%%%%%%%%%%%%%%%%%%%%%%%%%%%%%

consequently the second contribution is

\begin{gather*}%%%%%%%%%%%%%%%%%%%%%%%%%%%%%%%%%%%%%%%%%%%%%%%%%%%%%%%%%%%%%%%%%%%%%%%%%
= \int \limits _ {v_x \geq V} d\bm{v} \int \limits_{ -(v_x- V)\Delta t +\alpha}^{0} d( x-X(t) ) 
\int \limits _ {| \bm{x}_{\perp}+ \bm{v}_{ \perp} \delta_t | < R} 
 \frac{2}{M} (v_x-V) f(\bm{x},\bm{v},t) d\bm{x}_{\perp} \\
= \int \limits _ {v_x \geq V} d\bm{v} \int \limits_{ -(v_x- V)\Delta t +\alpha}^{0} d( x-X(t) ) 
\int \limits _ {|\bm{y}_{\perp} | < R} 
\frac{2}{M} (v_x-V) f(x, \bm{y}_{\perp}- \bm{v}_{ \perp} \delta_t ,\bm{v},t) d\bm{y}_{\perp}   \\
= \int \limits _ {v_x \geq V} d\bm{v} \int \limits_{ -(v_x- V)\Delta t +\alpha}^{0} d( x-X(t) ) 
\int \limits _ {|\bm{y}_{\perp} | < R} 
 \frac{2}{M} (v_x-V) f(x, \bm{y}_{\perp},\bm{v},t) + O(\Delta t) \, d\bm{y}_{\perp} \\
 =\int \limits _ {v_x \geq V} d\bm{v} \int \limits_{|\bm{y}_{\perp} | < R} 
 \bigl [ \frac{2}{M} (v_x-V) f(X(t), \bm{y}_{\perp},\bm{v},t) + O(\Delta t) \bigl]  (\,(v_x-V)\Delta t+\alpha ) \, d\bm{y}_{\perp} \\
 = \int  \limits_{|\bm{y}_{\perp} | < R} d\bm{y}_{\perp} \int \limits _ {v_x \geq V} d\bm{v} 
  \frac{2}{M} (v_x-V)^2  f_-(X(t), \bm{y}_{\perp},\bm{v},t)\Delta t  +\alpha
\end{gather*}%%%%%%%%%%%%%%%%%%%%%%%%%%%%%%%%%%%%%%%%%%%%%%%%%%%%%%%%%%%%%%%%%%%%%%%%%

where we performed the change of variable $ \bm{y}_{\perp} = \bm{x}_{\perp}+ \bm{v}_{ \perp} \delta_t $ to shift the dependence on time from the integral region to the integrand and expanded in power series to order $ O(\Delta t) $.
We can finally write  (we set $ M= 1$, $ M $ being an irrelevant constant
\begin{gather}%%%%%%%%%%%%%%%%%%%%%%%%%%%%%%%%%%%%%%%%%%%%%%%%%%%%%%%%%%%%%%%%%%%%%%%%%
\frac{\Delta V(t)}{\Delta t} = E   - \, 2 \int \limits _ {|\bm{x}_{\perp}|< R} d \bm{x}_{\perp} \int \limits _ {v_x<V} d\bm{v} (v_x-V)^2 f_-( X(t),\bm{x}_{\perp},\bm{v},t)  \notag \\ 
+ 2 \int  \limits_{|\bm{x}_{\perp} | < R} d\bm{x}_{\perp} \int \limits _ {v_x \geq V} d\bm{v} 
  (v_x-V)^2  f_-(X(t), \bm{x}_{\perp},\bm{v},t) + O(\Delta t)
\end{gather}%%%%%%%%%%%%%%%%%%%%%%%%%%%%%%%%%%%%%%%%%%%%%%%%%%%%%%%%%%%%%%%%%%%%%%%%%
taking the limit ($ \Delta t \to 0 $ ) concludes the derivation of eq.(\ref{d/dt V = E- F(t)}) and (\ref{F(t)}).

%%%%%%%%%%%%%%%%%%%%%%%%%%%%%%%%%%%%%%%%%%%%%%SECTION  RECOLLISION%%%%%%%%%%%%%%%%%%%%%%%%%%%%%%%
%%%%%%%%%%%%%%%%%%%%%%%%%%%%%%%%%%%%%%%%%%%%%%SECTION  RECOLLISION%%%%%%%%%%%%%%%%%%%%%%%%%%%%%%%
%%%%%%%%%%%%%%%%%%%%%%%%%%%%%%%%%%%%%%%%%%%%%%SECTION  RECOLLISION%%%%%%%%%%%%%%%%%%%%%%%%%%%%%%%
%%%%%%%%%%%%%%%%%%%%%%%%%%%%%%%%%%%%%%%%%%%%%%SECTION  RECOLLISION%%%%%%%%%%%%%%%%%%%%%%%%%%%%%%%
%%%%%%%%%%%%%%%%%%%%%%%%%%%%%%%%%%%%%%%%%%%%%%SECTION  RECOLLISION%%%%%%%%%%%%%%%%%%%%%%%%%%%%%%%
%%%%%%%%%%%%%%%%%%%%%%%%%%%%%%%%%%%%%%%%%%%%%%SECTION  RECOLLISION%%%%%%%%%%%%%%%%%%%%%%%%%%%%%%%

\subsection{Recollision terms}
\label{Section recollision terms}
The aim of our work is to derive the asymptotic behaviour of the body.
Equation (\ref{d/dt V = E- F(t)}) shows that its motion is coupled to that of the gas through the friction term in which $ f(\bm{x}, \bm{v},t) $ is present. The P.d.f.\footnote{Probability density function.} of the gas can be solved by means of characteristics (see for example \cite{BLA} on this).

Indeed let $ \bm{x}(s,t,\bm{x},\bm{v}) $, $ \bm{v}(s,t,\bm{x},\bm{v}) $ be the position and velocity of a particle at time $ s \leq t $, that at time  $ t $ occupies position  $ \bm{x} $ and velocity
$\bm{v} $ ; 
conservation of mass implies that the P.d.f. stays constant along particles trajectories and in particular 
\begin{equation}
f(\bm{x}, \bm{v},t)= f_0( \bm{x}(0,t,\bm{x},\bm{v}) , \bm{v}(0,t,\bm{x},\bm{v}))
\end{equation}
so that the problem of finding the gas distribution reduces to that of tracking the particles trajectories.

Given the evolution of the cylinder $X(t) $ , $ V(t) $, there is a unique backward time evolution leading to the initial position and velocity.  
Such backward evolution is a free motion up to possible collision-times in which the particle hits the body. On these times we keep track of  the particle displacement through condition (\ref{v'}) and (\ref{v tilde}). We proceed in this way until we reach the desired
 $\bm{x}(0,t,\bm{x},\bm{v}) $ ,  $ \bm{v}(0,t,\bm{x},\bm{v})$.
At the end using the initial state of the gas distribution, eq.(\ref{distribuzione di Maxwell}), we obtain
\begin{gather}
\label{F(t) e^-bv^2}
F(t)= 2 \rho ( \frac{\beta}{\pi} )^{\frac{3}{2}} \Bigl [ \int \limits _ {|\bm{x}_{\perp}| \leq R} d \bm{x}_{\perp} \int \limits _{v_x<V(t)} d\bm{v}
( \, v_x-V(t) \, )^2 e^{ -\beta  \bm{v}_{\zeroped{0}}^2 } \notag \\
- \int \limits _ {|\bm{x}_{\perp}| \leq R} d \bm{x}_{\perp} \int \limits _{v_x \geq V(t)} d\bm{v}
( \, v_x-V(t) \, )^2 e^{-\beta  \bm{v}_0^2 } \Bigl ]
\end{gather}
Where $ \bm{v}_{\zeroped{0}} = \bm{v}(0,t,X(t), \bm{x}_{\perp} , \bm{v}) $ or by components:

$\bm{v}_{\zeroped{0}} =  (v_{\zeroped{0}x },\bm{v}_{\zeroped{0\perp} }) = 
( v_{x}(0,t,X(t),\bm{x}_{\perp},\bm{v}), \bm{v}_{\zeroped{\perp}}(0,t,X(t),\bm{x}_{\perp},\bm{v})  )$.
\\
Note that in order to compute $ F(t) $ we need to evaluate $ \bm{v}_{\zeroped{0}} $ and hence to know all the previous history  $\{ X(s), V(s) , s<t \}$.\\
On the other hand, if the light particle goes back without undergoing any collision, then 
$ \bm{v}_{\zeroped{0}} =\bm{v} $ and the friction term is easily computed:
\begin{gather}
\label{F0(V)}
F_{ \zeroped{0} }(V)= A \Bigl [ \int \limits_{v_x<V(t)} dv_x
( \, v_x-V(t) \, )^2 e^{ -\beta  v_x^2 } 
-  \int \limits _{v_x \geq V(t)} dv_x
( \, v_x-V(t) \, )^2 e^{-\beta  v_x^2  } \Bigl ],  \notag \\
\end{gather}
where the constant contains the area of $ D(t) $ and the integral over $ \bm{v}_{\zeroped{\perp} } $ velocity:
\begin{gather}
 A= 2 \rho ( \frac{\beta}{\pi} )^{\frac{3}{2}} \, \pi R^2 \int_{\mathbb{R}^2} d\bm{v}_{\zeroped{\perp} } e^{-\beta \bm{v}_{\zeroped{\perp} }^2} =
2 \rho R^2 \sqrt{\pi \beta} 
\end{gather}

In this case the body moves as it was always immersed in the unperturbed equilibrium state of the gas and its dynamic is no more coupled to that of the gas;
it is thus convenient to split $ F(t) $ in terms that contain recollisions and terms that do not; the following expression serves the purpose:
\begin{gather}
F(t) = F_{ \zeroped{0} }(V) + r^+(t)+r^-(t)
\end{gather}
where:
\begin{gather*}
r^+(t)= 2\rho (\frac{\beta}{\pi})^{\frac{3}{2}}
\int\limits_{ |\bm{x}_{\perp}|\leq R} d\bm{x}_{\perp}\int\limits_{v_x < V(t)} 
d\bm{v}(v_x-V(t))^2  (e^{-\beta \bm{v}_{0}^2}-e^{-\beta \bm{v}^2}), \\  
r^-(t)= 2\rho   (\frac{\beta}{\pi} )^{\frac{3}{2}}
\int\limits_{ |\bm{x}_{\perp}| \leq R} d\bm{x}_{\perp} \int\limits_{v_x \geq V(t)} 
d\bm{v}(v_x-V(t))^2  (e^{-\beta \bm{v}^2}- e^{-\beta \bm{v}_{0}^2}),
\end{gather*}
besides when colliding $ C(t) $ a particle changes in general its components, nevertheless eq.(\ref{v'}b) and (\ref{|vperp tilde|= |vperp|}) imply
\begin{equation}
\bm{v}_{\zeroped{0\perp} }^2 = \bm{v}_{\zeroped{\perp} }^2
\end{equation}
and hence
\begin{equation*}
e^{-\beta \bm{v}_{0}^2} = e^{-\beta( v_{\zeroped{0}x }^2 + \bm{v}_{\zeroped{0\perp} }^2)}=
e^{-\beta \bm{v}_{\zeroped{\perp} }^2}e^{-\beta v_{\zeroped{0}x }^2 }
\end{equation*}
so that we can finally write recollision terms in their general form 
\begin{equation} 
\label{r+}
 r^+(t) =2\rho (\frac{\beta}{\pi})^{\frac{3}{2}}\int\limits_{|\bm{x}_{\perp}|\leq R}
 d\bm{x}_{\perp} \int\limits_{v_x<V(t)} dv_x (v_x-V)^2\int \limits_{\mathbb{R}^2} d\bm{v}_{\perp} e^{-\beta \bm{v}_{\perp}^2} (e^{-\beta v_{0x}^2} -e^{- \beta v_x^2})  \\
\end{equation}
\begin{equation} 
\label{r-}
r^-(t) =2\rho (\frac{\beta}{\pi})^{\frac{3}{2}}\int\limits_{|\bm{x}_{\perp}|\leq R}
d\bm{x}_{\perp} \int\limits_{v_x\geq V(t)} dv_x (v_x-V)^2\int \limits_{\mathbb{R}^2} d\bm{v}_{\perp} e^{-\beta \bm{v}_{\perp}^2} (e^{-\beta v_{x}^2} -e^{- \beta v_{0x}^2}).
\end{equation}

The analysis that has been carried out up to this point makes clear that recollision terms contain the "true" coupling action between the gas and the body. Now, in view of the physical system that will be discussed in the next section, we consider a time evolution with $V(t)>0$ for all $t>0$. In this case $r^+$ and $r^-$ both decelerate the body  with respect to the non recollisional case; i.e:
\begin{equation}
r^{\pm}(t) \geq 0.
\end{equation}
Indeed let $ \tau $ be the first time of collision between $ D(t) $ and a particle with velocity $ \bm{v}_{\zeroped{0}} $ before the impact and  velocity $\bm{v} $ afterwards; concerning $ r^+(t) $, necessary condition for the recollision to happen is
$ v_{\zeroped{0}x}< V(\tau) $, $ v_x < V(t) $.
After the collision
\begin{gather*}
v_x = 2V(\tau)- v_{\zeroped{0}x} =V(\tau)+(V(\tau)- v_{\zeroped{0}x})>V(\tau)
\end{gather*}
so $ v_x > V(\tau) > v_{\zeroped{0}x} $ , on the other hand 
$ v_{\zeroped{0}x}= -v_x + 2V(\tau) > - v_x $ \\
and the resulting inequality ( $ -v_x < v_{\zeroped{0}x} < v_x $) implies 
\begin{equation*}
e^{-\beta v_{0x}^2} \geq e^{-\beta v_x^2 }
\end{equation*}
If more collisions take place, we just iterate the above argument with 
$ \bm{v}_{\zeroped{1}} = \bm{v} $.
Concerning  $ r^-(t)  $ necessary condition for the recollision to happen is 
$ v_{\zeroped{0}x} > V(\tau) $ , $ v_x > V(t) $ similar computation as before gives
$ 0< V(t)< v_x <V(\tau) < v_{\zeroped{0}} $ that implies
\begin{equation*}
e^{-\beta v_{0x}^2} \leq e^{-\beta v_x^2 }.
\end{equation*}
Another inequality that comes directly from here, and from the fact that\\
$ 0 < e^{-v^2} \leq 1 $, is
\begin{gather}
e^{-\beta v_{0x}^2} -e^{- \beta v_x^2} 	\leq 1 \, ; \quad v_x \leq V(t) \\
e^{-\beta v_{x}^2} -e^{- \beta v_{0x}^2} \leq 1 \, ; \quad v_x \geq V(t) .
\end{gather}

%
%
%
%   Taken out 
%
%
%

It is worth remarking that $ F(t) $ reduce to $ F_{ \zeroped{0} }(V) $ also disregarding only collisions with bottom $ D(t) $ but taking into account those with the side,
indeed by virtue of eq.(\ref{v'}) and eq.(\ref{|vperp tilde|= |vperp|})
\begin{gather*}
| \bm{v}_{\zeroped{0}} |^2 = |v_{\zeroped{0}x }|^2 + |\bm{v}_{\zeroped{0\perp} }|^2 =
| \bm{v} |^2.
\end{gather*}

We point out that disregarding recollisions between the gas and the cylinder will uncouple the cylinder dynamic to that of the gas making the solution straightforward.
In this case, in fact, the friction term reduces to $ F_{ \zeroped{0} }(V) $ and in  Appendix \ref{appendix F_0(V)} we prove that it is an odd function and for $V>0 $ it is positive, increasing and convex. 
The cylinder then moves according to the differential equation: 
\begin{equation}
  \dot{V(t)} = E - F_{ \zeroped{0} }(V) 
\end{equation}
; that has a stationary solution $V(t) \equiv V_{\infty}$, with $V_{\infty}>0$, such that
\begin{equation}
\label{F0(Vinf)=E }
F_{ \zeroped{0} }(V_{\infty})=E
\end{equation}
and this is unique because $F_{ \zeroped{0} }(V)$ is monotone, so that we can write 
the equation as
\begin{gather}
\label{d/dt V = F0(V_inf)- F0(t)}  
\frac{d}{dt} V(t) = F_{ \zeroped{0} }(V_{\infty})-F_{ \zeroped{0} }(V) , \\
V(0) = V_0.
\end{gather}
Now, exploiting the properties of $F_{ \zeroped{0} }(V)$, by standard comparison argument (see also (\cite{nostro})  it is straightforward to show that, for  $0<V_0<V_{\infty}$
\begin{gather}
\label{AUTONOM. gamma e^ < V_inf - V < gamma e^}
\gamma e^{-C_- t}\leq  V_{\infty}-V(t) \leq \gamma e^{-C_+ t} 
\end{gather}
where 
\begin{equation}
C_+=F_{ \zeroped{0} }'(V_{0}) \quad 
C_-=F_{ \zeroped{0} }'(V_{\infty})
\end{equation} 
and $\gamma=V_{\infty}- V_0$.

In absence of recollisions our model forecasts an exponential law approach to a limiting velocity that is what we expected from ordinary friction model, namely: 
$ \dot{V(t)} = E - b\,V $ , $ b > 0 $.

%
%
% Taken out
%
%
%

The study of the autonomous equation was trivial right because such a model neglects the interaction between the gas and the body. On the contrary with the full problem (including recollisions) we will have to deal with the internal coupling of our system.
The next section will be devoted to the study of the full problem through the two main theorems of this work.

%%%%%%%%%%%%%%%%%%%%%%%%%%%%%%%%%%%%%%%%%%%%%%  section 3   %%%%%%%%%%%%%%%%%%%%%%%%%%%%%%%%%%%%
%%%%%%%%%%%%%%%%%%%%%%%%%%%%%%%%%%%%%%%%%%%%%%  section 3   %%%%%%%%%%%%%%%%%%%%%%%%%%%%%%%%%%%%
%%%%%%%%%%%%%%%%%%%%%%%%%%%%%%%%%%%%%%%%%%%%%%  section 3   %%%%%%%%%%%%%%%%%%%%%%%%%%%%%%%%%%%%
%%%%%%%%%%%%%%%%%%%%%%%%%%%%%%%%%%%%%%%%%%%%%%  section 3   %%%%%%%%%%%%%%%%%%%%%%%%%%%%%%%%%%%%
%%%%%%%%%%%%%%%%%%%%%%%%%%%%%%%%%%%%%%%%%%%%%%  section 3   %%%%%%%%%%%%%%%%%%%%%%%%%%%%%%%%%%%%
%%%%%%%%%%%%%%%%%%%%%%%%%%%%%%%%%%%%%%%%%%%%%%  section 3   %%%%%%%%%%%%%%%%%%%%%%%%%%%%%%%%%%%%
%%%%%%%%%%%%%%%%%%%%%%%%%%%%%%%%%%%%%%%%%%%%%%  section 3   %%%%%%%%%%%%%%%%%%%%%%%%%%%%%%%%%%%%
%%%%%%%%%%%%%%%%%%%%%%%%%%%%%%%%%%%%%%%%%%%%%%  section 3   %%%%%%%%%%%%%%%%%%%%%%%%%%%%%%%%%%%%
%%%%%%%%%%%%%%%%%%%%%%%%%%%%%%%%%%%%%%%%%%%%%%  section 3   %%%%%%%%%%%%%%%%%%%%%%%%%%%%%%%%%%%%
%%%%%%%%%%%%%%%%%%%%%%%%%%%%%%%%%%%%%%%%%%%%%%  section 3   %%%%%%%%%%%%%%%%%%%%%%%%%%%%%%%%%%%%
%%%%%%%%%%%%%%%%%%%%%%%%%%%%%%%%%%%%%%%%%%%%%%  section 3   %%%%%%%%%%%%%%%%%%%%%%%%%%%%%%%%%%%%

\section{The full Problem}
\subsection{The theorems and solution strategy}
\label{SECTION The theorems}

We are now in position to state the main results of the present work.
\begin{theor}
There exists $\gamma _0= \gamma_0(\beta$, $\rho,E,R,h,V_{\infty})>0$ sufficiently small such that , for any $\gamma \in (0,\gamma _0)$
, there exists at least one solution $(V(t),f(t))$ to problem (\ref{vlasov})-(\ref{d/dt V = E- F(t)}). Moreover any solution $V(t)$ satisfies 
\begin{equation}
\gamma e^{-C_-t} \leq V_{\infty}- V(t) \leq \gamma e^{-C_+ t}+ 
\frac{A_+}{(1+t)^3}\gamma^3 \, ; \qquad \forall t \geq 0
 \end{equation}
for a suitable positive constant $A_+$ indipendent of $\gamma$.
\end{theor}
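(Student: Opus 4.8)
\noindent\emph{Proof plan.}
Set $W(t):=V_{\infty}-V(t)$. Inserting $F(t)=F_{\zeroped{0}}(V)+r^{+}(t)+r^{-}(t)$ into (\ref{d/dt V = E- F(t)}) gives
\begin{equation*}
\dot W(t)=-\,g\bigl(W(t)\bigr)+r^{+}(t)+r^{-}(t),\qquad g(W):=F_{\zeroped{0}}(V_{\infty})-F_{\zeroped{0}}(V_{\infty}-W),
\end{equation*}
and, since $F_{\zeroped{0}}$ is increasing and convex on $(0,\infty)$ (Appendix~\ref{appendix F_0(V)}), one has $g(W)\le C_{-}W$ for all $W\in[0,V_{\infty})$ and $g(W)\ge C_{+}W$ for $W\in[0,\gamma]$. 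One stays in that range: $W(0)=\gamma$, and whenever $W=\gamma$ one has $\dot W=-g(\gamma)+r^{\pm}<0$ since (within the bootstrap below) the recollision terms are of higher order; hence $0\le W(t)\le\gamma$ for all $t$, so $V(t)\in(0,V_{\infty})$, which is exactly what makes $r^{\pm}\ge0$. The lower bound $W(t)\ge\gamma e^{-C_{-}t}$ is then immediate from $\dot W\ge-C_{-}W$, and for the upper bound Duhamel's formula gives
\begin{equation*}
W(t)\ \le\ \gamma e^{-C_{+}t}+\int_{0}^{t}e^{-C_{+}(t-s)}\bigl(r^{+}(s)+r^{-}(s)\bigr)\,ds .
\end{equation*}
The theorem therefore follows from a bootstrap: with $\Phi(s):=\gamma e^{-C_{+}s}+A_{+}\gamma^{3}(1+s)^{-3}$, I would show that $W\le\Phi$ on $[0,t]$ forces the recollision integral to be strictly below $A_{+}\gamma^{3}(1+t)^{-3}$, so the set of such $t$ is nonempty, open and closed. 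Existence of a solution comes from the same a priori class by a Schauder argument for the map that, given a body history $V(\cdot)$, transports the gas along it through characteristics and returns the velocity solving (\ref{d/dt V = E- F(t)}); this map preserves $\{\,V:\gamma e^{-C_{-}t}\le V_{\infty}-V\le\Phi\,\}$, and $|\dot V|\le|E|+|F|$ supplies the equicontinuity for compactness.

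Everything rests on estimating (\ref{r+})--(\ref{r-}) under the hypothesis $W\le\Phi$. The backward term is harmless: once $V$ is known to be increasing, a particle hitting $D$ from behind cannot previously have collided, so $r^{-}\equiv0$, and the work is in $r^{+}$. A particle contributing to $r^{+}(s)$ recollides with $D(s)$ at time $s$ and last met the disk at some time $\tau<s$; the intervening free flight forces its longitudinal velocity to equal $v_{x}=\bigl(X(s)-X(\tau)\bigr)/(s-\tau)=V_{\infty}-\bar W(\tau,s)$, with $\bar W(\tau,s):=\tfrac1{s-\tau}\int_{\tau}^{s}W$, and its pre-collision velocity at $\tau$ to equal $v_{0x}=2V(\tau)-v_{x}$. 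Two small factors follow in the integrand of (\ref{r+}): $e^{-\beta v_{0x}^{2}}-e^{-\beta v_{x}^{2}}\le\beta\,|v_{x}^{2}-v_{0x}^{2}|\lesssim|v_{x}-v_{0x}|=2\,|W(\tau)-\bar W(\tau,s)|$, and $(v_{x}-V(s))^{2}=\bigl(W(s)-\bar W(\tau,s)\bigr)^{2}$. A third appears on changing the longitudinal integration variable from $v_{x}$ to $\tau$, with Jacobian $\bigl|\partial v_{x}/\partial\tau\bigr|=(s-\tau)^{-2}\int_{\tau}^{s}\bigl(V(\sigma)-V(\tau)\bigr)\,d\sigma\le W(\tau)/(s-\tau)$; this needs $\tau\mapsto v_{x}$ injective, for which one uses that $V$ is increasing — a consequence, for $\gamma$ small, of $\dot V=g(W)-r^{+}-r^{-}>0$.

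The genuinely new ingredient, responsible for the rate (\ref{intro trend 3}) rather than (\ref{intro trend d+2}) of \cite{nostro} and \cite{CONVEX}, is the transverse confinement by the side $S(t)$. While a recolliding particle keeps $0\le x-X(\sigma)\le h$, every collision with $S(t)$ preserves $|\bm{x}_{\perp}|\le R$ by (\ref{|vperp tilde|= |vperp|}), so the particle cannot leave the tube behind the disk whatever its transverse velocity; and since its body-frame excursion never exceeds $\int_{\tau}^{s}W\lesssim\gamma$, choosing $\gamma_{0}$ small in terms of $h$ guarantees that the recolliding particles never leave the slab $0\le x-X(\sigma)\le h$ at all. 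Hence the integral over $\bm{v}_{\perp}$ in (\ref{r+}) only contributes the finite factor $\int_{\mathbb{R}^{2}}e^{-\beta\bm{v}_{\perp}^{2}}\,d\bm{v}_{\perp}$, in place of the $O\bigl((s-\tau)^{-2}\bigr)$ factor that, for the disk and the convex body in $\mathbb{R}^{3}$, carries the two extra powers of $t$ — this is the one-dimensional-like mechanism announced in the Introduction. Collecting the three small factors together with the confinement yields a bound of the shape
\begin{equation*}
r^{+}(s)\ \lesssim\ \int_{0}^{s}\frac{1}{s-\tau}\;W(\tau)\;\bigl|W(\tau)-\bar W(\tau,s)\bigr|\;\bigl(W(s)-\bar W(\tau,s)\bigr)^{2}\,d\tau ,
\end{equation*}
and inserting $W\le\Phi$ — splitting the $\tau$-integral at $s/2$, using that the last two factors vanish like $s-\tau$ as $\tau\to s$ so as to absorb the factor $1/(s-\tau)$, and using the decay of $\int_{\tau}^{s}W$ on the far part — gives $r^{+}(s)\lesssim c_{0}\,\gamma^{3}(1+s)^{-3}$ with $c_{0}$ independent of $A_{+}$, the $A_{+}$-dependent remainder carrying additional powers of $\gamma$. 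Since $\int_{0}^{t}e^{-C_{+}(t-s)}(1+s)^{-3}\,ds\lesssim(1+t)^{-3}$, one first fixes $A_{+}$ large enough to dominate the $A_{+}$-free part and then $\gamma_{0}$ small enough to absorb the remainder, closing the bootstrap.

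The hard part is the estimate of $r^{+}$: controlling the recollision geometry precisely — establishing in particular that the relevant recolliding particles stay trapped inside the barrier for the whole flight (which is what ties $\gamma_{0}$ to $h$ and produces the $t^{-3}$ rate), justifying the change of variables $v_{x}\leftrightarrow\tau$ together with the monotonicity of $V$, and handling the $\tau$-integration near $\tau=s$, where the naive integrand is singular but the true one is not. The reduction of the first paragraph and the Schauder argument are comparatively routine.
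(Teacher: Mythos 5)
Your overall strategy mirrors the paper's: split $F$ into $F_0$ plus recollision terms, use the convexity of $F_0$ to sandwich the linearised drag between $C_+$ and $C_-$, obtain the lower bound by positivity of $r^\pm$, obtain the upper bound by Duhamel and a bootstrap in the class $\{\,\gamma e^{-C_-t}\le V_\infty-V\le \gamma e^{-C_+t}+A_+\gamma^3(1+t)^{-3}\,\}$, and close the existence argument by Schauder. The key physical input — that the side $S(t)$ traps recolliding particles so that the $\bm v_\perp$ integral contributes only a constant rather than the $O((t-s)^{-2})$ gain available for the disk of \cite{nostro}, hence the rate is $t^{-3}$ and not $t^{-5}$ — is also the one the paper isolates (Lemma 3.2 and Section \ref{SECTION estimate r+}).

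There is, however, a genuine gap in your treatment of $r^-$. You assert $r^-\equiv 0$ on the grounds that $V$ is increasing, deduced from $\dot V = g(W)-r^+-r^->0$ for small $\gamma$. This fails for large $t$: the worst admissible lower bound for the restoring term is $g(W)\ge C_+\gamma e^{-C_-t}$, which decays exponentially, while the recollision drag is only known to be $O(\gamma^3(1+t)^{-3})$, which decays polynomially, so the sign of $\dot V$ is not controlled beyond a window $t\lesssim\log(1/\gamma)$. The paper accordingly proves monotonicity of $V_W$ only on $[0,t_0]$ with $t_0=\tfrac{1}{2C_-}\log(C_+/\gamma)$, and treats $r^-_W$ as nonzero but harmless, bounding it by $C\,\chi\{t>t_0\}\bigl(\tfrac{\gamma+A_+\gamma^3}{(1+t)^\alpha}\bigr)^3$ using the recollision constraint $v_x=\overline{W_{s,t}}<V_\infty$ (there are no barriers on the back face, so this is the same computation as in \cite{nostro}). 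The gap is benign — $r^-$ is dominated by $r^+$ and does not change the conclusion — but as written the claim "$r^-\equiv 0$" is unsupported.

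Two further remarks on execution. First, your estimate of $r^+$ is finer than the upper bound requires: you extract the small factor $|v_x-v_{0x}|\lesssim|W(\tau)-\bar W(\tau,s)|$ from the exponential difference and change variables $v_x\mapsto\tau$ with Jacobian $\le W(\tau)/(s-\tau)$, then split at $s/2$. The paper simply bounds $e^{-\beta v_{0x}^2}-e^{-\beta v_x^2}\le 1$, observes that the recollision condition confines $v_x$ to $[\overline{W_t},W(t)]$, an interval of length $\lesssim(\gamma+A_+\gamma^3)/(1+t)$ by Lemma \ref{lemmi W_s,t}(d), and integrates $(v_x-W(t))^2$ directly to get the cubic decay; your finer machinery is essentially what one needs for the \emph{lower} bound of Theorem 3.2, not for Theorem 3.1. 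Second, you restrict $\gamma_0$ so small (in terms of $h$) that $\int_0^\infty(V_\infty-W)<h$ and no particle ever escapes the tube. This is admissible since $\gamma_0$ may depend on $h$, but the paper's Lemma 3.2 also treats the regime where $s^*>0$ exists, splitting the $s$-integral at $s^*$, $t/2$ and showing the escaping region still contributes only $O((1+t)^{-3})$; keeping that case is what lets the same $\gamma_0$ serve for the lower-bound Theorem 3.2, where the $s^*>0$ regime is the physically interesting one.
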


\begin{theor}
Let $\gamma \in (0,\gamma _0)$. There exists a sufficiently large $\overline{t}$, depending on $\gamma$, such that any solution $(V(t),f(t))$ to problem (\ref{vlasov})-(\ref{d/dt V = E- F(t)})  satisfies
\begin{equation}
V_{\infty}-V(t)\geq \gamma e ^{-C_-t} + \frac{A_- \gamma ^4}{t^3} \chi \{t>\overline{t} \}
\, ; \qquad \forall t \geq 0
\end{equation}
where  $A_-$ is a positive constant, indipendent of $\gamma$, and $ \chi \{... \} $ characteristic function of $ \{... \}. $

\end{theor}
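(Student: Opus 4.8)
\emph{Proposed proof.} The starting point would be the same identity used for Theorem 3.1: since $E=F_{\zeroped 0}(V_{\infty})$ and $F(t)=F_{\zeroped 0}(V(t))+r^{+}(t)+r^{-}(t)$, equation (\ref{d/dt V = E- F(t)}) reads $\dot V(t)=\bigl[F_{\zeroped 0}(V_{\infty})-F_{\zeroped 0}(V(t))\bigr]-r^{+}(t)-r^{-}(t)$. Setting $g(t):=V_{\infty}-V(t)$, which is strictly positive by Theorem 3.1, and using that $F_{\zeroped 0}$ is increasing and convex on $(0,V_{\infty})$ (Appendix \ref{appendix F_0(V)}), the slope of any chord of $F_{\zeroped 0}$ is at most the slope at the right endpoint, so $F_{\zeroped 0}(V_{\infty})-F_{\zeroped 0}(V_{\infty}-g)\le F_{\zeroped 0}'(V_{\infty})\,g=C_{-}g$; together with $r^{\pm}(t)\ge 0$ this gives
\begin{equation*}
\dot g(t)\;\ge\;-C_{-}\,g(t)+r^{+}(t).
\end{equation*}
Multiplying by $e^{C_{-}t}$, integrating from $0$, and using $g(0)=\gamma$,
\begin{equation*}
V_{\infty}-V(t)\;\ge\;\gamma e^{-C_{-}t}+\int_{0}^{t}e^{-C_{-}(t-s)}\,r^{+}(s)\,ds .
\end{equation*}
Hence the theorem follows once one proves a pointwise lower bound $r^{+}(s)\ge c\,\gamma^{4}s^{-3}$ for all $s$ larger than some $\overline t(\gamma)$, with $c>0$ depending only on the fixed parameters: then $\int_{0}^{t}e^{-C_{-}(t-s)}r^{+}(s)\,ds\ge e^{-C_{-}}\int_{t-1}^{t}r^{+}(s)\,ds\ge e^{-C_{-}}c\,\gamma^{4}t^{-3}$ for $t>\overline t(\gamma)+1$, which is the desired estimate with $A_{-}:=e^{-C_{-}}c$, the characteristic function absorbing the threshold.

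To bound $r^{+}(s)$ from below I would keep, in the phase-space integral (\ref{r+}), only the particles whose backward trajectory from $(X(s),\bm x_{\perp},v_{x},\bm v_{\perp})$ undergoes \emph{exactly one} collision with the frontal disk $D$; collisions with the side $S$ are immaterial, since by (\ref{v tilde}) and (\ref{|vperp tilde|= |vperp|}) they change neither $v_{x}$ nor $|\bm v_{\perp}|$, hence affect neither the integrand nor the longitudinal motion. Since the integrand vanishes off such ``recolliding'' configurations and is there non-negative, this is a legitimate restriction. If the $D$-collision occurs at a time $\tau\in(0,s)$, then (\ref{v'}) forces $v_{\zeroped 0 x}=2V(\tau)-v_{x}$, free longitudinal flight between $\tau$ and $s$ imposes $\int_{\tau}^{s}\bigl(v_{x}-V(\sigma)\bigr)\,d\sigma=0$, and the collision at $\tau$ being of the frontal type $\Omega^{+}$ requires $V(\tau)<v_{x}<V(s)$. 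Writing $V(\sigma)=V_{\infty}-\varepsilon(\sigma)$ and feeding in the two-sided control $\gamma e^{-C_{-}\sigma}\le\varepsilon(\sigma)\le\gamma e^{-C_{+}\sigma}+A_{+}\gamma^{3}(1+\sigma)^{-3}$ furnished by Theorem 3.1, one then shows, letting $\tau$ range over a fixed compact interval $[\tau_{1},\tau_{2}]\subset(0,\tau^{\dagger})$ (with $\tau^{\dagger}$ an absolute constant chosen so that the time-$0$ position of the particle already lies inside the hollow cylinder): the constraint $\int_{\tau}^{s}(v_{x}-V)=0$ defines a smooth decreasing branch $\tau\mapsto v_{x}(\tau,s)$ with $V_{\infty}-v_{x}\asymp\gamma/s$ and $|\partial_{\tau}v_{x}|\asymp\gamma/s$, so the admissible $v_{x}$ fill an interval of length $\asymp\gamma/s$ at distance $\asymp\gamma/s$ below $V(s)$, whence $\int dv_{x}\,(v_{x}-V(s))^{2}\asymp\gamma^{3}s^{-3}$; on this set $e^{-\beta v_{\zeroped 0 x}^{2}}-e^{-\beta v_{x}^{2}}\asymp v_{x}-v_{\zeroped 0 x}=2\bigl(v_{x}-V(\tau)\bigr)\asymp\varepsilon(\tau)\asymp\gamma$, because $v_{x}$ and $v_{\zeroped 0 x}$ both sit within $O(\gamma)$ of $V_{\infty}>0$; the maximal longitudinal excursion of the particle, on $(0,\tau)$ and on $(\tau,s)$ alike, is bounded by $\int_{0}^{\infty}\varepsilon(\sigma)\,d\sigma\asymp\gamma$, which is $<h$ as soon as $\gamma<\gamma_{0}$ with $\gamma_{0}$ small in terms of $h$, so the particle never escapes through the open face; and, decisively, lateral confinement by the barriers (the $S$-collisions, which preserve $|\bm v_{\perp}|$) forces $|\bm x_{\perp}(\sigma)|\le R$ for all $\sigma$, so the remaining integrations $\int_{|\bm x_{\perp}|\le R}d\bm x_{\perp}\int_{|\bm v_{\perp}|\le 1}d\bm v_{\perp}\,e^{-\beta\bm v_{\perp}^{2}}$ contribute a factor $\asymp 1$ \emph{independent of $s$}. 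Multiplying the four factors gives $r^{+}(s)\gtrsim 1\cdot(\gamma^{3}s^{-3})\cdot\gamma=c\,\gamma^{4}s^{-3}$ for $s>\overline t(\gamma)$, with $\overline t$ chosen large enough (hence $\gamma$-dependent) that all these asymptotic estimates hold simultaneously. The $s$-independence of the last factor is exactly the point at which this model parts company with the disk and the convex body, where lateral escape confines $\bm v_{\perp}$ to a ball of radius $\asymp R/s$ and costs two extra powers of $s$.

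\textbf{The main obstacle} will be the self-consistent geometric control underlying the previous paragraph: $r^{+}(s)$ is built from the true, only implicitly known evolution $V(\cdot)$, so every statement about the selected family --- that its single frontal collision really takes place at the asserted $\tau$, that no additional collision with $D$ is created on $(0,\tau)$ or $(\tau,s)$, and above all that the longitudinal excursions remain below $h$ so that trapping actually occurs --- must be derived solely from the a priori two-sided bounds on $V$ given by Theorem 3.1. The quantitatively delicate item is the non-degeneracy of the map $\tau\mapsto v_{x}(\tau,s)$ and of its Jacobian uniformly in large $s$, together with the sign of $v_{x}-V(s)$ at the return time; once these are in hand, the measure estimate and the lower bound on $e^{-\beta v_{\zeroped 0 x}^{2}}-e^{-\beta v_{x}^{2}}$ are elementary. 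It remains only to check that discarding configurations with two or more $D$-collisions costs at most a multiplicative constant (harmless, their contribution being non-negative) and that $\tau_{1},\tau_{2}$ and the threshold $\gamma_{0}$ can be fixed once and for all in terms of $\beta,\rho,E,R,h,V_{\infty}$, as declared in the statement. The two-dimensional case is handled identically, with $\bm v_{\perp}\in\mathbb{R}$ in place of $\mathbb{R}^{2}$; the $s^{-3}$ rate is unaffected, precisely because the lateral integral is $\asymp 1$ in both dimensions.
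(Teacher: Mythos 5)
Your proposal takes essentially the same route as the paper's proof. Both start from the Duhamel representation
\begin{equation*}
V_{\infty}-V(t)\geq \gamma e^{-C_-t}+\int_0^t e^{-C_-(t-s)}\,r^+(s)\,ds,
\end{equation*}
both reduce the theorem to a pointwise lower bound $r^+(s)\gtrsim \gamma^4 s^{-3}$ obtained by restricting the phase-space integral to configurations producing a \emph{single} frontal recollision at a bounded time, and both hinge on the identical decisive observation that the lateral barriers keep the particle inside, so the $\bm v_\perp$-integral contributes a factor $\asymp 1$ rather than the $\asymp s^{-2}$ of the open disk; the $v_x$-measure $\asymp(\gamma/s)^3$ and the factor $e^{-\beta v_{0x}^2}-e^{-\beta v_x^2}\asymp\gamma$ are the same two ingredients the paper proves. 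The differences are presentational and of coverage rather than of idea. The paper does not change variables to the recollision time $\tau$ and invoke a Jacobian; instead it introduces $s_0$ as the smallest root of $2V(s)-\overline{V_{s,t}}-V_0=0$, shows $m_1\le s_0\le m_2$ with $\gamma$-independent $m_1,m_2$, and writes the restricted $v_x$-integral directly over $(\overline{V_t},\overline{V_{s_0,t}})$, estimating $\overline{V_{s_0,t}}-\overline{V_t}=\frac{s_0}{t-s_0}(\overline{V_t}-\overline{V_{s_0}})\geq C\gamma/t$ algebraically; this sidesteps the non-degeneracy of $\tau\mapsto v_x(\tau,s)$ that you (rightly) flag as the delicate item. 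Also, the paper handles a second regime beyond yours: when $\gamma$ is not so small that escape is impossible (i.e. $s^*>0$ exists), it requires $s_+(\gamma)<m_1/2$ so that $s^*<s_0$ and replaces the lower endpoint $\overline{V_t}$ by $\overline{V_{s^*,t}}$, proving $\overline{V_{s_0,t}}-\overline{V_{s^*,t}}\geq C\gamma/t$; your argument only covers the regime $\int_0^\infty(V_\infty-V)\,ds < h$ and would need $\gamma_0$ shrunk accordingly. Finally, several of your quantitative claims ($V_\infty-v_x\asymp\gamma/s$, $|\partial_\tau v_x|\asymp\gamma/s$, the absence of further $D$-collisions on $(\tau,s)$) are asserted with $\asymp$ and ``one then shows''; the paper spells these out as the three displayed inequalities (\ref{e-e > gamma})--(\ref{V_(s0 t) - V_(t) >}), so what you sketch is the right structure but leaves precisely the estimates the paper carries out in detail.
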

We remind that both Theorems 3.1 and 3.2 hold in $d =2$ and $ d=3 $ dimensions.\\

We present here the strategy to prove Theorem $ 3.1 $.

If we consider an assigned velocity $ V_1 =W $ for the body we can compute the respective gas P.d.f. $ f_W $, or equivalently the recollision terms $ r_W^{\pm} $, by means of characteristics as explained in section (\ref{Section recollision terms}).
We then consider the modified problem
\begin{gather}
\label{d dt V_W = E -FW -rW}
\frac{d}{dt} V_W(t) = \frac{E-F_0(W(t))}{V_{\infty}-W(t)}(V_{\infty}-V_W(t)) -r_W^+(t)-r_W^-(t), \\
V_W(0) = V_0>0, 
\end{gather}
where
\begin{equation} 
\label{r+W}
 r_W^+(t) =2\rho (\frac{\beta}{\pi})^{\frac{3}{2}}\int\limits_{|\bm{x}_{\perp}|\leq R}
 d\bm{x}_{\perp} \int\limits_{v_x<W(t)} dv_x (v_x-W(t))^2\int \limits_{\mathbb{R}^2} d\bm{v}_{\perp} e^{-\beta \bm{v}_{\perp}^2} (e^{-\beta v_{0x}^2} -e^{- \beta v_x^2})  \\
\end{equation}
\begin{equation} 
\label{r-W}
r_W^-(t) =2\rho (\frac{\beta}{\pi})^{\frac{3}{2}}\int\limits_{|\bm{x}_{\perp}|\leq R}
d\bm{x}_{\perp} \int\limits_{v_x\geq W(t)} dv_x (v_x-W(t))^2\int \limits_{\mathbb{R}^2} d\bm{v}_{\perp} e^{-\beta \bm{v}_{\perp}^2} (e^{-\beta v_{x}^2} -e^{- \beta v_{0x}^2}).
\end{equation}
and $ v_{\zeroped{0}x} = v_x(0,t,X(t), \bm{x}_{\perp} ,v_x,\bm{v}_{\zeroped{\perp}}) $. Eq. (\ref{d dt V_W = E -FW -rW}) leads to  a new velocity $ V_2= V_W $.
We repeat the same argument with  $ V_2 $ as the new starting velocity and so on. We thus obtain a sequence  $ \{ V_n \, , n\geq 1 \} $ which we can formally express through the map :
\begin{gather}
V_{n+1}= \mathcal{F} (V_n) \notag \\
\label{map}
V_1(t)=W(t),
\end{gather}

Any solution of our problem is a fixed point of the map.
The proof of the theorem will be constructive: we will show how the map acts on a suitable class of functions and look for a set invariant under its action showing that functions belonging to this set enjoy properties quoted in Theorem $ 3.1 $.

The above mentioned suitable class of functions is represented by the set $\Omega_\alpha$ of $t$-a.e differentiable functions $t \to W(t) \in [V_0,V_\infty] $ such that:
%%%%%%%%%%%%%%%%%%%%%%%%%%%%%%%%%%%%%%%%%%%%%%%%%%%%%%%%%%%%%%%%%%%%%%%%%
\begin{subequations}
\label{Omega}
\begin{gather} 
\label{inequality a}
V_{\infty} -W(t)\geq \gamma e^{-C_- t}, \\
\label{inequality b}
 V_{\infty} -W(t)\leq \gamma e^{-C_+ t} +\gamma^3 \frac{A_+}{(1+t)^\alpha} \quad ; \quad 
 \alpha >1, \\
\frac{d}{dt} W(t)>0 \quad \forall t\in [0,t_0],
\end{gather}
\end{subequations}
with 
%%%%%%%%%%%%%%%%%%%%%%%%%%%%%%%%%%%%%%%%%%%%%%%%%%%%%%%%%%%%%%%%%%%%%%%%%
\begin{gather}
t_0 =\frac{1}{2C_-}Log[\frac{C_+}{\gamma}], \\
\gamma = V_\infty -V_0, \\
C_+ =F_0'(V_0)\leq C_- =F_0'(V_\infty), 
\end{gather}
%%%%%%%%%%%%%%%%%%%%%%%%%%%%%%%%%%%%%%%%%%%%%%%%%%%%%%%%%%%%%%%%%%%%%%%%%
and where $A_+$ is a positive constant independent of $\gamma$.
%This seemingly abstract choice has on the contrary a solid physics meaning, we address the interested reader to appendix for a detailed argumentation.

Before starting the proof we collect in the following Lemma some properties of the function 
$ W $, which will be useful in the sequel: for $ 0 \leq s \leq t $, we set 
%%%%%%%%%%%%%%%%%%%%%%%%%%%%%%%%%%%%%%%%%%%%%%%%%%%%%%%%%%%%%%%%%%%%%%%%%
\begin{gather}
\overline{W_{s,t}}=\frac{1}{t-s} \int_s^t W(\tau)d\tau
\end{gather}
%%%%%%%%%%%%%%%%%%%%%%%%%%%%%%%%%%%%%%%%%%%%%%%%%%%%%%%%%%%%%%%%%%%%%%%%%
and
%%%%%%%%%%%%%%%%%%%%%%%%%%%%%%%%%%%%%%%%%%%%%%%%%%%%%%%%%%%%%%%%%%%%%%%%%
\begin{gather}
\overline{W_{t}}=\overline{W_{0,t}}\, . 
\end{gather}
%%%%%%%%%%%%%%%%%%%%%%%%%%%%%%%%%%%%%%%%%%%%%%%%%%%%%%%%%%%%%%%%%%%%%%%%%
In what follows the symbol $C$ will indicate any positive constant, independent from $\gamma$ which is our small parameter. Any such constant is explicitly computable.

\begin{Lems}
\label{lemmi W_s,t}
Let $W(t) \in \Omega_\alpha \ ; \ \alpha>1 $. Suppose $\gamma$ sufficiently small. \\
Then $\forall t>0$:
%%%%%%%%%%%%%%%%%%%%%%%%%%%%%%%%%%%%%%%%%%%%%%%%%%%%%%%%%%%%%%%%%%%%%%%%%
\begin{subequations}
\label{Lemmas properties}
\begin{gather} 
W(t)>\overline{W_{t}} \ , \quad  \\
\frac{d}{dt} \overline{W_{t}}  > 0 ,  \\
\label{W_s,t  > W_t}
\overline{W_{s,t}}>\overline{W_{t}} \quad ; \quad \forall s \in (0,t), \\
W(t)-\overline{W_{t}} \leq \frac{C}{(1+t)}(\gamma+ A_+ \gamma^3).
\end{gather}
\end{subequations}
\end{Lems}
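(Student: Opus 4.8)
The plan is to derive all four statements from the defining inequalities \eqref{Omega} together with the monotonicity of $W$ on $[0,t_0]$. First I would observe that $W$ is non-decreasing on $[0,t_0]$ and that beyond $t_0$ the lower bound \eqref{inequality a} and upper bound \eqref{inequality b} pinch $W$ toward $V_\infty$ at essentially the same exponential rate (up to the $\gamma^3(1+t)^{-\alpha}$ correction), so that $W$ stays close to its running average. For statement (a), $W(t)>\overline{W_t}$, I would split into the regime $t\le t_0$, where it is immediate because $W$ is strictly increasing and hence exceeds its average over $[0,t]$, and the regime $t>t_0$; for the latter I would use that $W(\tau)$ for $\tau\in[0,t]$ is bounded above by $V_\infty$ and that the average picks up a genuine contribution from the earlier times where $W$ is appreciably below $V_\infty$, while $W(t)$ itself is within $O(\gamma e^{-C_+ t})+O(\gamma^3(1+t)^{-\alpha})$ of $V_\infty$; comparing the two quantitatively yields the strict inequality once $\gamma$ is small. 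The cleanest route is to write $W(t)-\overline{W_t}=\frac1t\int_0^t\big(W(t)-W(\tau)\big)d\tau$ and show the integrand is $\ge 0$ on a set of positive measure and never very negative.

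For statement (b), $\frac{d}{dt}\overline{W_t}>0$, I would differentiate directly:
\begin{equation}
\frac{d}{dt}\overline{W_t}=\frac{d}{dt}\Big(\frac1t\int_0^t W(\tau)d\tau\Big)=\frac{1}{t}\big(W(t)-\overline{W_t}\big),
\end{equation}
so (b) is an immediate corollary of (a). Statement (c), $\overline{W_{s,t}}>\overline{W_t}$ for $s\in(0,t)$, I would get from the same circle of ideas: since $t\mapsto\overline{W_t}$ is increasing by (b), the average over the later window $[s,t]$ exceeds the average over $[0,t]$; more precisely, $\overline{W_t}$ is a weighted average of $\overline{W_{0,s}}$ and $\overline{W_{s,t}}$, and because $W$ is (weakly) increasing overall — increasing on $[0,t_0]$ and trapped near $V_\infty$ afterwards — one has $\overline{W_{0,s}}\le\overline{W_{s,t}}$ with strict inequality, whence $\overline{W_t}<\overline{W_{s,t}}$. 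I would be slightly careful about the case $s$ or $t$ straddling $t_0$, but the pinching bounds handle it.

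Finally, for the quantitative bound (d), $W(t)-\overline{W_t}\le \frac{C}{1+t}(\gamma+A_+\gamma^3)$, I would again write $W(t)-\overline{W_t}=\frac1t\int_0^t(W(t)-W(\tau))d\tau\le\frac1t\int_0^t(V_\infty-W(\tau))d\tau$, then insert the upper bound \eqref{inequality b}: the $\gamma e^{-C_+\tau}$ term integrates to something $O(\gamma)$ independent of $t$, giving an $O(\gamma/t)$ contribution, and the $\gamma^3 A_+(1+\tau)^{-\alpha}$ term integrates (since $\alpha>1$) to something $O(\gamma^3 A_+)$, again giving $O(\gamma^3 A_+/t)$; replacing $1/t$ by $C/(1+t)$ (valid for $t$ bounded away from $0$, and trivial for small $t$ since the left side is then $O(\gamma)$ anyway because both $W(t)$ and $\overline{W_t}$ lie in $[V_0,V_\infty]$ with $V_\infty-V_0=\gamma$) finishes the estimate. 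The main obstacle I anticipate is not any single inequality but handling the two time regimes uniformly — in particular making statement (a) strict for \emph{all} $t>0$, including $t$ near $t_0$ where the monotonicity hypothesis switches off; the resolution is that for $t>t_0$ the gap $V_\infty-W(t)$ is already exponentially smaller than the averaged gap $\overline{V_\infty-W}$, which is kept of order $\gamma e^{-C_+ t_0}/t$ or larger by the contribution of early times, so the comparison has room to spare once $\gamma<\gamma_0$.
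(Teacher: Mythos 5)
Your approach is essentially the paper's: for (a) split at $t_0$ and, for $t>t_0$, compare the exponentially small gap $V_\infty - W(t)$ (via the upper bound in $\Omega_\alpha$) against the $O(\gamma/t)$ averaged gap coming from the lower bound $V_\infty - W(\tau)\ge\gamma e^{-C_-\tau}$; (b) is the direct differentiation $\frac{d}{dt}\overline{W_t}=\frac{1}{t}(W(t)-\overline{W_t})$; (c) follows from the weighted-average identity $\overline{W_t}=\frac{s}{t}\overline{W_s}+\frac{t-s}{t}\overline{W_{s,t}}$ together with $\overline{W_s}<\overline{W_t}$ from (b), which is exactly the paper's algebraic identity $\overline{W_{s,t}}-\overline{W_t}=\frac{s}{t-s}(\overline{W_t}-\overline{W_s})$; and (d) is the same direct bound $W(t)-\overline{W_t}\le\frac{1}{t}\int_0^t(V_\infty-W(\tau))\,d\tau$ followed by integrating the upper bound in $\Omega_\alpha$.

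One small caution in (c): the remark that ``$W$ is (weakly) increasing overall'' is not justified by the definition of $\Omega_\alpha$ — beyond $t_0$ the only control on $W$ is the pinching between the two bounds, which does not force monotonicity — so the claim $\overline{W_s}\le\overline{W_{s,t}}$ cannot be derived that way. Fortunately you don't need it: the correct and sufficient hinge, which you also state, is that $\overline{W_t}$ is increasing by (b), and then the weighted-average identity closes the argument.
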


\begin{proof}
a) The result is trivially true for $t<t_0$ because in this region $W$ is increasing.
For $t>t_0$ 
\begin{align}
W(t)-\overline{W_t} & =\frac{1}{t}\int _0 ^t d \tau [W(t)-W(\tau)] \notag \\
                    & =\frac{1}{t}\int _0 ^t d \tau [W(t)-V_{\infty}+V_{\infty}-W(\tau)] \notag \\
                    & \geq -\gamma e^{-C_+t} -\frac{A_+ \gamma ^3}{(1+t)^{\alpha}}+\frac{1}{t} \int _0^t d \tau \ \gamma e^{-C_-\tau} \notag \\
                    &= -\gamma e^{-C_+t} -\frac{A_+ \gamma ^3}{(1+t)^{\alpha}}+\frac{\gamma}{C_-t}-\frac{\gamma}{C_-t} e^{-C_-t} \notag
\end{align}
The last quantity is positive by taking $\gamma$ sufficiently small and consequently $t_0$ sufficiently large; in this way $t$ is large and the term $\frac{\gamma}{C_-t}$ is dominant because $ \alpha > 1 $.\\
b)\begin{equation}
\frac{d}{dt}\overline{W_t}=-\frac{1}{t^2}\int_0^tW(\tau)d \tau +\frac{1}{t}W(t)=\frac{-\overline{W_t}+W(t)}{t}>0 \notag
\end{equation} \\
the last inequality holding by virtue of (a).

c) \begin{align}\overline{W_{s,t}}-\overline{W_t} &=\frac{1}{t-s}\int _s^t W(\tau) d \tau-\frac{1}{t}\int _0^t W(\tau) d \tau \notag \\
                                  &=\frac{1}{t-s} \Bigl (\int _0^t W(\tau) d \tau)-\int_0^s W(\tau)d\tau \Bigr) -\frac{1}{t}\int _0^t W(\tau) d \tau \notag \\
                                  &=\frac{s}{t(t-s)}\int _0^t W(\tau)d\tau-\frac{1}{t-s} \int_ 0^s W(\tau)d\tau \notag \\
                                  &=\frac{s}{t-s} (\overline{W_t}-\overline{W_s}) \notag
\end{align}
which is positive by (b). \\
d)
\begin{align}
W(t)-\overline{W_t} &= \frac{1}{t} \int _0^t d\tau [W(t)-W(\tau)] \notag \\
                     &\leq  \frac{1}{t} \int _0^t d\tau [V_{\infty}-W(\tau)] \notag \\
                     & \leq \frac{1}{t} \int _0^t d\tau [ \gamma e^{-C_+ \tau} + \frac{A \gamma ^3}{(1+\tau)^{\alpha}}] \notag
\end{align}
We now observe that both the functions $u(t)=\frac{1}{t}\int_0^t d\tau e^{-C_+ \tau}$ and $z(t)=\frac{1}{t}\int_0^t d\tau \frac{1}{(1+\tau)^{\alpha}}$ are bounded and decay as $\frac{1}{1+t} $,    we can then conclude that $u(t)\leq \frac{C_1}{1+t}$ and 
$z(t)\leq \frac{C_2}{1+t}$ and finally
\begin{equation}
W(t)-\overline{W_t} \leq \frac{C}{1+t}(\gamma +A \gamma ^3) \notag
\end{equation}
\end{proof}

%%%%%%%%%%%%%%%%%%%%%%%%%%%%%%%%%%%%%%%%%%%%%%%%%%%%%%%%%%%%%%%%%%%%%%%%%
%%%%%%%%%%%%%%%%%%%%%%%%%%%%%%%%%%%%%%%%%%%%%%%%%%%%%%%%%%%%%%%%%%%%%%%%%
%%%%%%%%%%%%%%%%%%%%%%%%%%%%%%%%%%%%%%%%%%%%%%%%%%%%%%%%%%%%%%%%%%%%%%%%%
%%%%%%%%%%%%%%%%%%%%%%%%%%%%%%%%%%%%%%%%%%%%%%%%%%%%%%%%%%%%%%%%%%%%%%%%%
%%%%%%%%%%%%%%%%%%%%%%%%%%%%%%%%%%%%%%%%%%%%%%%%%%%%%%%%%%%%%%%%%%%%%%%%%
%%%%%%%%%%%%%%%%%%%%%%%%%%%%%%%%%%%%%%%%%%%%%%%%%%%%%%%%%%%%%%%%%%%%%%%%%
%%%%%%%%%%%%%%%%%%%%%%%%%%%%%%%%%%%%%%%%%%%%%%%%%%%%%%%%%%%%%%%%%%%%%%%%%

\subsection{Computation of Recollision terms}
\label{SECTION Computation Recollision terms}
As explained in the previous section, the proof starts studying how the map acts on the set
$ \Omega_{\alpha} $, in order to do this we preliminarily have to estimate $ r_W^{\pm}(t) $.

The computation of $r_W^+$ concerns frontal recollisions. 
The position-velocity of the particle on which we integrate in $r_W^+$ is:
$ (x,\bm{x}_{\zeroped{\perp}},v_x,\bm{v}_{\zeroped{\perp}} ) $ with 
$ x=X(t) $ , $ | \bm{x}_{\zeroped{\perp}} | \leq R $ , $ v_x \leq W(t) $.

%Here lateral barriers are present of length $h$ that can trap particles otherwise free to 
%escape from the disk.

The region of $ (\bm{x} , \bm{v}) $ that doesn't lead to recollision implies 
$ \bm{v}_{\zeroped{0}x} = \bm{v}_x  $ and thus has zero contribution on $r_W^+$.

We then compute $r_W^+$ by tracing back the time evolution of gas particles which showed recollision in the past. More precisely: if a collision occurs at time $t$, to have a past recollision a time 
$s\in (0,t)$ has to exist such that 
%%%%%%%%%%%%%%%%%%%%%%%%%%%%%%%%%%%%%%%%%%%%%%%%%%%%%%%%%%%%%%%%%%%%%%%%%
\begin{gather}
\label{vx = W_s,t}
v_x(t-s) =\int_s^t W(\tau) d\tau,
\end{gather}
%%%%%%%%%%%%%%%%%%%%%%%%%%%%%%%%%%%%%%%%%%%%%%%%%%%%%%%%%%%%%%%%%%%%%%%%%
that is $v_x = \overline{W_{s,t}}$ for some $s \in (0,t)$. \\
Condition (\ref{vx = W_s,t}) holds because every collision the particle has against lateral barriers doesn't change the momentum along the x-axis, that is $v_x$.

Besides by (\ref{W_s,t  > W_t}) in Lemma \ref{lemmi W_s,t}, the above condition gets:
\begin{gather}
\label{v_x > W_t}
v_x  \geq \overline{W_{t}}.
\end{gather}

The presence of the barriers makes the recollision condition on $\bm{v}_{\zeroped{\perp}}$ more delicate.\\
Let $x(\tau)$ be the position (on the x-axis) at time $\tau$ of a particle with velocity $v_x$, colliding frontally with the disk at time t, and which had a previous collision at time $s$,  i.e.
%%%%%%%%%%%%%%%%%%%%%%%%%%%%%%%%%%%%%%%%%%%%%%%%%%%%%%%%%%%%%%%%%%%%%%%%%
\begin{gather}
\label{x_s}
x(\tau)= X(s)+v_x(\tau -s)\quad ; \quad \tau \in [s,t],
\end{gather}
where $v_x$ satisfies the recollision equation (\ref{vx = W_s,t}) so that
%%%%%%%%%%%%%%%%%%%%%%%%%%%%%%%%%%%%%%%%%%%%%%%%%%%%%%%%%%%%%%%%%%%%%%%%%
\begin{subequations}
\begin{gather}
x(s)=X(s), \\
x(t)=x=X(t) .
\end{gather}
\end{subequations}

%%%%%%%%%%%%%%%%%%%%%%%%%%%%%%%%%%%%%%%%%%%%%%%%%%%%%%%%%%%%%%%%%%%%%%%%%
Now,  for a given $s(v_x)$ the particle can either stay within the barriers for the whole time $ t-s $, 
or it can move outside during one or multiple intervals of times $\delta_i $ ;   we define  $ \Delta $ as the largest of these possible intervals, namely $ \Delta= \underset{i } {Max} \ ( \delta_i ) $. Obviously if the particle stays inside the body for the whole time, there are no such intervals $\delta_i $ and $ \Delta=0 $.

If $ n $ is the number of times the particle is outside during an interval $\delta_i >0 $ we define
$ \tau_i^{(a)} < \tau_i^{(b)} , \quad  i=1,2 \dots n  \quad $  as those times such that
%%%%%%%%%%%%%%%%%%%%%%%%%%%%%%%%%%%%%%%%%%%%%%%%%%%%%%%%%%%%%%%%%%%%%%%%%
%%%%%%%%%%%%%%%%%%%%%%%%%%%%%%%%%%%%%%%%%%%%%%%%%%%%%%%%%%%%%%%%%%%%%%%%%
\begin{gather}
\label{tau solution}
x(\tau_i^{(\eta)})-X(\tau_i^{(\eta)})=h \quad ; \quad \eta = a,b \ ,\\
x(\tau)-X(\tau)>h \quad \forall \tau \in (\tau_i^{(a)}, \tau_i^{(b)}), \\
\tau_{i+1}^{(a)} < \tau_{i+1}^{(b)}<\tau_{i}^{(a)}<\tau_{i}^{(b)} \quad  i=1,2 \dots n -1 
\end{gather}
%%%%%%%%%%%%%%%%%%%%%%%%%%%%%%%%%%%%%%%%%%%%%%%%%%%%%%%%%%%%%%%%%%%%%%%%%
that is to say that $  \tau_{i}^{(a)}$  , in the natural forward evolution, are times when the particle escapes the cylinder, while  
$ \tau_{i}^{(b)}$ are times of entering back the cylinder.
As clear from equation (\ref{x_s}) and  (\ref{tau solution}), $\tau_i$ depends only on $t$, $s$ ($v_x$ being determined from (\ref{vx = W_s,t}) ) and on $h$ as a parameter of our problem, 
we can then  write the $ n $ intervals of time during which the particle is outside the barriers as
\begin{gather}
\delta_i(s,t)=\tau_{i}^{(b)} - \tau_{i}^{(a)}  \quad  i=1,2 \dots n
\end{gather}
%%%%%%%%%%%%%%%%%%%%%%%%%%%%%%%%%%%%%%%%%%%%%%%%%%%%%%%%%%%%%%%%%%%%%%%%%
Let now $\bm{x}_{\zeroped{\perp}}(\tau) $ and $\bm{v}_{\zeroped{\perp}}(\tau) $ be respectively the position and the velocity of the particle along the $\bm{x}_{\zeroped{\perp}}$-axis at time $\tau$,  so that $ |\bm{x}_{\zeroped{\perp}}(s) | < R $ and $  \bm{x}_{\zeroped{\perp}}(t) = \bm{x}_{\zeroped{\perp}}  $.

It is only during intervals $\delta_i$ that particles are free to move in region 
$ |\bm{x}_{\zeroped{\perp}}(\tau) | > R$ , therefore in order for the particle to have a recollision at time $s$ the condition  on $\bm{v}_{\zeroped{\perp}}$ is the following 
%%%%%%%%%%%%%%%%%%%%%%%%%%%%%%%%%%%%%%%%%%%%%%%%%%%%%%%%%%%%%%%%%%%%%%%%%
\begin{gather}
\label{recollision v_y prima}
\lvert \bm{x}_{\zeroped{\perp}}(\tau_i^{(b)})-\bm{v}_{\zeroped{\perp}}(\tau_i^{(b)})(\tau_i^{(b)}-\tau_i^{(a)})\rvert < R 
\quad ; \quad  \forall i 
\end{gather}
%%%%%%%%%%%%%%%%%%%%%%%%%%%%%%%%%%%%%%%%%%%%%%%%%%%%%%%%%%%%%%%%%%%%%%%%%
A necessary condition for each equation in (\ref{recollision v_y prima}) to hold is
\begin{gather}
| \bm{v}_{\zeroped{\perp}}(\tau_i^{(b)})|\, \delta_i(s,t) = 
| \bm{v}_{\zeroped{\perp}}|\, \delta_i(s,t)  \leq  2R \quad ; 
\quad  \forall i 
\end{gather}
where we used the conservation of the transverse velocity $ \bm{v}_{\zeroped{\perp}} .$\\
The above condition is  equivalent to
\begin{gather}
| \bm{v}_{\zeroped{\perp}} |\, \Delta(s,t)  \leq  2R 
\end{gather}
where
\begin{gather}
\label{DELTA definition}
\Delta(s,t)= \underset{i \in [1,n]}{Max} \ \delta_i(s,t),
\end{gather}
%%%%%%%%%%%%%%%%%%%%%%%%%%%%%%%%%%%%%%%%%%%%%%%%%%%%%%%%%%%%%%%%%%%%%%%%%

Finally for a recollision to happen it is necessary that
\begin{gather}
\label{recollision conditions}
v_x \geq \overline{W_{t}} \qquad \text{ and } \qquad  \lvert \bm{v}_{\zeroped{\perp}} \rvert \, 
\Delta(s(v_x),t) \leq 2R . 
\end{gather}

Notice that for $v_x$ such that $\Delta(s(v_x),t)=0$, condition (\ref{recollision conditions}) implies\\
$\bm{v}_{\zeroped{\perp}} \in \mathbb{R}^2$ (or, in two dimensions, $ v_y \in \mathbb{R}^1 $), in other words if the particle never exits the cylinder no matter what the transverse velocity is, the particle is going to collide back $ D(t) $ (obviously provided $ v_x \geq \overline{W_{t}}  $). 

We now observe that the recollision conditions studied in \cite{nostro} (disk accelerated in a free Vlasov gas) can be deducted as a special case of the cylinder when $ h=0 $.
Indeed in this case equation (\ref{tau solution}) becomes 
%%%%%%%%%%%%%%%%%%%%%%%%%%%%%%%%%%%%%%%%%%%%%%%%%%%%%%%%%%%%%%%%%%%%%%%%%
\begin{gather}
x(\tau)-X(\tau)=0 \quad ; \quad \tau \in [s,t] , 
\end{gather}
%%%%%%%%%%%%%%%%%%%%%%%%%%%%%%%%%%%%%%%%%%%%%%%%%%%%%%%%%%%%%%%%%%%%%%%%%
with two solutions $\tau^{(a)}=s$ , $\tau^{(b)}=t $ and
%%%%%%%%%%%%%%%%%%%%%%%%%%%%%%%%%%%%%%%%%%%%%%%%%%%%%%%%%%%%%%%%%%%%%%%%%
\begin{gather}
\Delta(s,t)=t-s(v_x) ,
\end{gather}
as expected because $ h=0 $ is the simple case in which the particle is compelled to stay ''out''  during the whole interval $(s,t)$.
This gives back the necessary condition of recollision on $\bm{v}_{\zeroped{\perp}}$ for the simple disk in \cite{nostro}, namely:
\begin{gather}
v_x \geq \overline{W_{t}} \qquad \text{ and } \qquad \lvert \bm{v}_{\zeroped{\perp}} \rvert \leq \frac{2R}{t-s(v_x)} .
\end{gather}

%%%%%%%%%%%%%%%%%%%%%%%%%%%%%%%%%%%%%%%%%%%%%%%%%%%%%%%%%%%%%%%%%%%%%%%%%
%%%%%%%%%%%%%%%%%%%%%%%%%%%%%%%%%%%%%%%%%%%%%%%%%%%%%%%%%%%%%%%%%%%%%%%%%
%%%%%%%%%%%%%%%%%%%%%%%%%%%%%%%%%%%%%%%%%%%%%%%%%%%%%%%%%%%%%%%%%%%%%%%%%
%%%%%%%%%%%%%%%%%%%%%%%%%%%%%%%%%%%%%%%%%%%%%%%%%%%%%%%%%%%%%%%%%%%%%%%%%
%%%%%%%%%%%%%%%%%%%%%%%%%%%%%%%%%%%%%%%%%%%%%%%%%%%%%%%%%%%%%%%%%%%%%%%%%
%%%%%%%%%%%%%%%%%%%%%%%%%%%%%%%%%%%%%%%%%%%%%%%%%%%%%%%%%%%%%%%%%%%%%%%%%
%%%%%%%%%%%%%%%%%%%%%%%%%%%%%%%%%%%%%%%%%%%%%%%%%%%%%%%%%%%%%%%%%%%%%%%%%

\subsection{Escaping times}
\label{SECTION kinematic issues}
As pointed out in the previous section, the region of recollision for $\bm{v}_{\zeroped{\perp}}$ strongly depends on the behaviour of the particle during the interval $[s,t]$. In particular, we need a criterion to determine whether the particle moves outside the barriers or remains constantly in the cylinder.
To this aim we introduce the following function which represents the distance between the particle and the disk, namely:
\begin{gather}
d(s,t,\tau)= \int _s ^{\tau} dp \ ( \overline{W_{s,t}}-W(p)) \qquad \tau \in [s,t],
\end{gather}

Before stating an important result, we define two auxiliary functions (Figure 2): 
\begin{gather}
\label{defi-g} g(s,\tau)= \lim_{t\to \infty} d(s,t,\tau)= \int_s ^{\tau} dp \ (V_{\infty}-W(p)), \\
\label{def-f} f(s)= \lim_{\tau \to \infty} g(s,\tau)=\int_s ^{\infty} dp \ (V_{\infty}-W(p)).
\end{gather}                                                                                                                                                
given $ 0 \leq s \leq \tau \leq t $.
Inequality (\ref{inequality b}) ensures both the correctness of the limit (\ref{defi-g}) (in particular the fact that $ \lim_{t \to \infty} V_{\infty}- \overline{W_{st}} =0$), and the convergence of the integral (\ref{def-f}).

\begin{figure}

\includegraphics[scale=0.28]{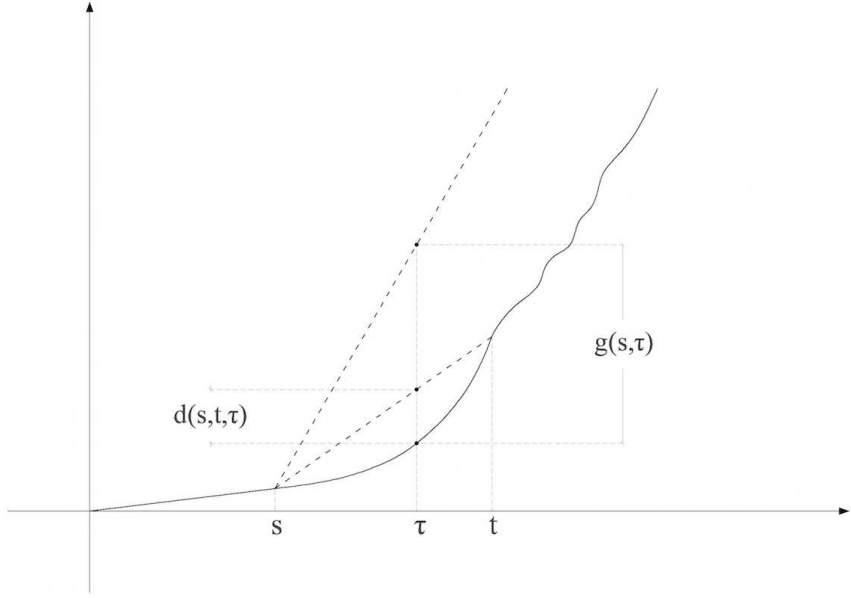}
\caption{  Example of a trajectory of the cylinder base (thick line) and two particle trajectories (dashed lines), both particles collide with the cylinder base at time $  s $; while the first particle collides again at the finite time $  t $, the second collides again after an infinite time. The horizontal axis represents the time, the vertical axis represents the position along the $  x $ axis. }
\end{figure}

We preliminarily notice that the function $ f(s) $ is strictly decreasing 
and bounded from above:
\begin{gather}
\label{df/ds} 
\frac{df}{ds}(s)=W(s)-V_{\infty}<0, \\
\label{f(s)<C)}
f(s)<f(0)=\int_0 ^{\infty} dp (V_{\infty}-W(p))<C.
\end{gather}
moreover
\begin{gather}
\label{d(s,t,)< f(s)}
d(s,t,\tau) <  \int _s ^{\tau} dp \ ( V_{\infty} - W(p) ) \leq f(s).
\end{gather}

Let now $s^*$ be the possible solution of 
\begin{gather}
\label{f(s*= h}
f(s^*) = h ,
\end{gather}
where we remind the reader that $ h $ is the length of the cylinder side; properties (\ref{df/ds}) and (\ref{f(s)<C)}) ensure that there is at most one $s^* > 0 $ solution to (\ref{f(s*= h}).\\
We summarize an important result in the following lemma.

\begin{Lems}
\label{Proposition d(s,t,tau)}
For $ h\, $ such that $  $ $ \exists s^* > 0$ solution of (\ref{f(s*= h}) then $ \Delta(s(v_x),t)=0   $ \\ $ \forall s^* < s < t $. \\
For $ h\, $ such that $  $ $ \nexists s^* > 0$ solution of (\ref{f(s*= h}) then $ \Delta(s(v_x),t)=0 \quad \forall \, 0< s < t $ .

\end{Lems}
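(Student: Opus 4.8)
The plan is to reduce the statement to the uniform bound $d(s,t,\tau)<f(s)$ recorded in (\ref{d(s,t,)< f(s)}) together with the strict monotonicity of $f$; the whole lemma is essentially the analytic translation of the geometric picture in Figure 2.

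First I would make precise that $x(\tau)-X(\tau)=d(s,t,\tau)$ for $\tau\in[s,t]$. A particle that collides with the disk at time $s$ and recollides frontally at time $t$ keeps a constant longitudinal velocity $v_x$ between $s$ and $t$ (lateral bounces leave $v_x$ unchanged), so by (\ref{x_s}) its position is $x(\tau)=X(s)+v_x(\tau-s)$, while the recollision condition (\ref{vx = W_s,t}) forces $v_x=\overline{W_{s,t}}$. Hence
\[
x(\tau)-X(\tau)=\overline{W_{s,t}}\,(\tau-s)-\int_s^{\tau}W(p)\,dp=\int_s^{\tau}\bigl(\overline{W_{s,t}}-W(p)\bigr)\,dp=d(s,t,\tau).
\]
Since the particle is outside the barriers at time $\tau$ precisely when $x(\tau)-X(\tau)>h$, the equality $\Delta(s(v_x),t)=0$ — by the definition (\ref{DELTA definition}) and the remark following it — is equivalent to $d(s,t,\tau)\leq h$ for every $\tau\in[s,t]$.

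Next I would feed in the bound $d(s,t,\tau)<f(s)$ of (\ref{d(s,t,)< f(s)}) and the properties (\ref{df/ds}) and (\ref{f(s)<C)}) of $f$. If $s^{*}>0$ with $f(s^{*})=h$ exists, then for $s>s^{*}$ strict monotonicity gives $f(s)<f(s^{*})=h$, so $d(s,t,\tau)<h$ on all of $[s,t]$ and $\Delta(s(v_x),t)=0$: this is the first assertion. If no $s^{*}>0$ with $f(s^{*})=h$ exists, I would observe that $f$ is continuous and strictly decreasing with $f(0)<\infty$ and $f(s)\to 0$ as $s\to\infty$, so $f$ maps $(0,\infty)$ onto $(0,f(0))$; since $h>0$, the absence of a positive root of $f=h$ then forces $h\geq f(0)$, whence $f(s)<f(0)\leq h$ for every $s>0$, and again $d(s,t,\tau)<h$ on $[s,t]$, i.e.\ $\Delta(s(v_x),t)=0$ for all $0<s<t$.

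I do not expect a real obstacle. The only point deserving an explicit line is the dichotomy on $h$, namely that $f(s^{*})=h$ has no positive solution if and only if $h\geq f(0)$, which is immediate from the intermediate value theorem applied to the strictly decreasing $f$; everything else is the substitution above combined with the already-established inequality (\ref{d(s,t,)< f(s)}).
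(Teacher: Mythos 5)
Your proof is correct and follows essentially the same route as the paper's: identify $x(\tau)-X(\tau)$ with $d(s,t,\tau)$, invoke the bound $d(s,t,\tau)<f(s)$ from (\ref{d(s,t,)< f(s)}) together with the strict monotonicity of $f$, and split according to whether $h\geq f(0)$ or $h<f(0)$. Your write-up is a bit more explicit than the paper about two minor points — the dichotomy "$\nexists\,s^{*}>0\iff h\geq f(0)$" (via the intermediate value theorem and $f(s)\to 0$) and the edge case $h=f(0)$, where you correctly write $f(s)<f(0)\leq h$ rather than the paper's $f(s)<f(0)<h$ — but these are clarifications, not a different argument.
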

\begin{proof} 

If $ h \geq f(0)$, by properties (\ref{df/ds}) and (\ref{f(s)<C)}) $ \nexists s^* > 0$ solution of (\ref{f(s*= h}), in this case 
$ f(s) < f(0) < h $ $ \ \forall \,s > 0 $ and together with eq.(\ref{d(s,t,)< f(s)})\\
we have 
\begin{gather*}
d(s,t,\tau) <h \quad \forall \, 0 < s < t ;
\end{gather*} 
that means the particle is trapped in the cylinder during the whole interval $ [s,t] $, thus
$ \Delta(s,t) = 0  \quad \forall \,  0 < s < t $.\\
\

On the other hand if $ h < f(0) $ the same properties of above guarantee the existence of 
$ s^* > 0$, in this case $ f(s) < h  $ $ \ \forall \, s^* < s < t  $ 
 and together with eq.(\ref{d(s,t,)< f(s)}) we have
\begin{gather*}
d(s,t,\tau) < h \quad \forall  s^* < s < t
\end{gather*}
that is
$ \Delta(s,t) = 0  \quad \forall \,  s^* < s < t $.

\end{proof}

We now prove that $s^*$ is bounded from above and from below in the following way:
\begin{align}
\label{s-< s* < s+}
s_- (\gamma) \leq s^* \leq s_+ (\gamma).
\end{align}
where
\begin{gather}
 s_- (\gamma)=\frac{1}{C_-} log \frac{\gamma}{hC_-}, \\
 s_+(\gamma) = \sqrt[\alpha -1]{\frac{C \gamma}{h}}-1.  
\end{gather}
Indeed,
\begin{align}
f(s^*)&=\int_{s^*} ^ \infty dp \ (V_{\infty}-W(p)) \notag \\
&\leq \int_{s^*} ^ \infty dp \ ( \gamma e^{-C_+ p} + \frac{A \gamma ^3}{(1+p)^{\alpha}}) \notag  \\
&=\frac{\gamma}{C_+} e^{-C_+ s^*}+ \frac{A \gamma ^3}{\alpha -1} \frac{1}{(1+s^*)^{\alpha -1}}.
\end{align}

We observe that 
$e^{-C_+ s^*} \leq \frac{C_1}{(1+s^*)^{\alpha -1}}$ for a suitable constant $C_1$ , while $A \gamma ^3 \leq C_2 \gamma $ for  $\gamma $ sufficiently small. We therefore obtain 
\begin{gather}
f(s^*) \leq \frac{C \gamma}{(1+s^*)^{\alpha -1}} ,
\end{gather}
which, together with definition (\ref{f(s*= h}), yields 
\begin{gather}
s^* \leq \sqrt[\alpha -1]{\frac{C \gamma}{h}}-1
\end{gather}
On the other hand,
\begin{align}
f(s^*)&=\int_{s^*} ^ \infty dp \ (V_{\infty}-W(p)) \notag \\
& \geq \int_{s^*} ^ \infty dp \ \gamma  e^{-C_- p} \notag \\
& = \frac{\gamma}{C_-} e^{-C_- s^*}
\end{align}
which, together with definition (\ref{f(s*= h}) , gives
\begin{gather}
s^* \geq \frac{1}{C_-} log \frac{\gamma}{hC_-} 
\end{gather}

It is now clear from (\ref{s-< s* < s+}) that the magnitude of $s^*$ increases with $\gamma$ ; in particular for $\gamma$ small enough  there is no positive $s^*$  satisfying (\ref{f(s*= h}).

The quantity  $ \Theta_{\gamma} = \frac{h} {\gamma}  $ is a natural parameter of our problem which gives an indication on when the presence of the barriers becomes relevant in the dynamic. Indeed,
in the limiting case $  \Theta_{\gamma}  \rightarrow 0 $, condition (\ref{s-< s* < s+})   ensures that
$ s^* \rightarrow \infty $ and the trapping effect becomes  clearly irrelevant.

Now, as it will be clear in the next section, the dominant contribution that will be responsible for the $ t^{-3} $ trend, comes essentially from the region
$ s^* < s < t  $ but this is  achievable only for larger and larger $ t $ as 
$  \Theta_{\gamma}  $ decrease.

%%%%%%%%%%%%%%%%%%%%%%%%%%%%%%%%%%%%%%%%%%%%%%%%%%%%%%%%%%%%%%%%%%%%%%%%%%%%%%%
%%%%%%%%%%%%%%%%%%%%%%%%%%%%%%%%%%%%%%%%%%%%%%%%%%%%%%%%%%%%%%%%%%%%%%%%%%%%%%%%%%
%%%%%%%%%%%%%%%%%%%%%%%%%%%%%%%%%%%%%%%%%%%%%%%%%%%%%%%%%%%%%%%%%%%%%%%%%%%%
%%%%%%%%%%%%%%%%%%%%%%%%%%%%%%%%%%%%%%%%%%%%%%%%%%%%%%%%%%%%%%%%%%%%%%%%%%%%%%%
%%%%%%%%%%%%%%%%%%%%%%%%%%%%%%%%%%%%%%%%%%%%%%%%%%%%%%%%%%%%%%%%%%%%%%%%%%%%%%%%%%
%%%%%%%%%%%%%%%%%%%%%%%%%%%%%%%%%%%%%%%%%%%%%%%%%%%%%%%%%%%%%%%%%%%%%%%%%%%%
%%%%%%%%%%%%%%%%%%%%%%%%%%%%%%%%%%%%%%%%%%%%%%%%%%%%%%%%%%%%%%%%%%%%%%%%%%%%%%%
%%%%%%%%%%%%%%%%%%%%%%%%%%%%%%%%%%%%%%%%%%%%%%%%%%%%%%%%%%%%%%%%%%%%%%%%%%%%%%%%%%
%%%%%%%%%%%%%%%%%%%%%%%%%%%%%%%%%%%%%%%%%%%%%%%%%%%%%%%%%%%%%%%%%%%%%%%%%%%%

\subsection{Estimate of $r_W ^+$ and $r_W ^-$  }
\label{SECTION estimate r+}

We are now in position to show the behavior of recollision terms, that is summarized in the next lemma. 
\begin{Lems}
Let $W \in \Omega_\alpha \ , \  \alpha >1 $. Then $\forall t \geq 0$,
\begin{gather}
\label{r_+ behaviour lemma}
r_W^+(t) \leq C \frac{(\gamma+A_+ \gamma^3)^3}{ (1+t)^3 } .\\
r_W^- \leq C \chi \{t>t_0 \} \bigl(\frac{\gamma+A_+ \gamma^3}{(1+t)^\alpha} \bigl)^3.
\end{gather}
\end{Lems}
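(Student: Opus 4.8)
The plan is to bound the two recollision integrals (\ref{r+W})--(\ref{r-W}) by the same elementary scheme. In each case I would (i) restrict the $v_x$-integration to the set of velocities that genuinely produce a recollision at time $t$, using the kinematics of Section~\ref{SECTION Computation Recollision terms}; (ii) bound $\int_{|\bm{x}_\perp|\le R}d\bm{x}_\perp\le\pi R^2$ and $\int_{\mathbb{R}^2}d\bm{v}_\perp\,e^{-\beta\bm{v}_\perp^2}=\pi/\beta$, i.e.\ by absolute constants, simply discarding any further recollision restriction on $\bm{x}_\perp$ and $\bm{v}_\perp$; (iii) estimate the exponential difference by $e^{-\beta v_{0x}^2}-e^{-\beta v_x^2}\le 1$ for $r_W^+$ (resp. $e^{-\beta v_x^2}-e^{-\beta v_{0x}^2}\le 1$ for $r_W^-$, both nonnegative by the sign analysis of Section~\ref{Section recollision terms}); and (iv) extract the $t$-decay from the only surviving factor, $\int(v_x-W(t))^2\,dv_x$ over the short recollision interval, which equals one third of its length cubed.

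For $r_W^+$: a frontal recollision forces $v_x\ge\overline{W_t}$ by (\ref{v_x > W_t}), while the domain in (\ref{r+W}) imposes $v_x<W(t)$, so the recollision interval is contained in $[\overline{W_t},W(t))$. I would also point out that for $\gamma$ small Lemma~\ref{Proposition d(s,t,tau)} gives $\Delta\equiv0$, so the transverse part of (\ref{recollision conditions}) imposes nothing on $\bm{v}_\perp$ at all --- this is the trapping effect, and the reason no additional decay can be gained from the transverse velocities, unlike for the disk. Steps (ii)--(iv) then give $r_W^+(t)\le C\,(W(t)-\overline{W_t})^3$, and the fourth inequality of Lemma~\ref{lemmi W_s,t}, $W(t)-\overline{W_t}\le C(\gamma+A_+\gamma^3)/(1+t)$, closes the estimate.

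For $r_W^-$: the sign analysis of Section~\ref{Section recollision terms} shows that a backward recollision at time $t$ whose last preceding collision is at some $s<t$ forces $W(t)<v_x=\overline{W_{s,t}}<W(s)<V_\infty$. Since $W$ is increasing on $[0,t_0]$, for $t\le t_0$ there is no admissible $s$ and $r_W^-(t)\equiv0$, which is exactly the factor $\chi\{t>t_0\}$. For $t>t_0$ the same chain confines the recollision interval to $(W(t),V_\infty)$, so steps (ii)--(iv) give $r_W^-(t)\le C\,(V_\infty-W(t))^3$. It then suffices to note that an exponential is dominated by any inverse power, $e^{-C_+t}\le C(1+t)^{-\alpha}$, so that (\ref{inequality b}) yields $V_\infty-W(t)\le C(\gamma+A_+\gamma^3)/(1+t)^{\alpha}$ and hence the stated bound.

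The measure-theoretic bookkeeping of the recollision domains and the Gaussian integrals in (i)--(iii) is routine. The point that actually carries the result, and that I would be most careful about, is twofold: that the transverse integral contributes only a constant --- so the exponent of $r_W^+$ is governed solely by the slowly decaying $W(t)-\overline{W_t}\sim(1+t)^{-1}$, the running average lagging behind $W(t)$ --- while $r_W^-$ is governed by the faster $V_\infty-W(t)\sim(1+t)^{-\alpha}$ and, crucially, vanishes identically for $t\le t_0$ by monotonicity of $W$. The case $d=2$ is identical after replacing $\bm{v}_\perp\in\mathbb{R}^2$ by $v_y\in\mathbb{R}$.
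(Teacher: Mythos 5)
Your proof is correct and uses the same ingredients as the paper's: the recollision interval $[\overline{W_t},W(t))$ for $v_x$ (from (\ref{vx = W_s,t}) and Lemma~\ref{lemmi W_s,t}(c)), the trivial bound on the transverse integrals and on the exponential difference, Lemma~\ref{lemmi W_s,t}(d) to control $W(t)-\overline{W_t}$ for $r_W^+$, and (\ref{inequality b}) together with $e^{-C_+t}\le C(1+t)^{-\alpha}$ to control $V_\infty-W(t)$ for $r_W^-$, with the observation that $r_W^-\equiv0$ on $[0,t_0]$ since $W$ is increasing there. The one organizational difference is that for $r_W^+$ the paper, when $s^*>0$ exists, further splits the $v_x$-integral into three sub-regions according to the position of $s(v_x)$ relative to $t/2$ and $s^*$, obtaining a sharper $t^{-3\alpha}$ in the outermost region but falling back to $t^{-3}$ in the other two; since $t^{-3}$ dominates, this split does not sharpen the final exponent, and your choice to discard the transverse constraint uniformly and bound the whole $v_x$-integral by $(W(t)-\overline{W_t})^3$ in one stroke reaches the same estimate more directly (the paper itself notes that a finer estimate of the $\bm{v}_\perp$-integral "would be useless"). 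One small inaccuracy in a side remark: $\Delta\equiv0$ for \emph{all} $s>0$ follows from Lemma~\ref{Proposition d(s,t,tau)} only when $\gamma$ is small enough relative to $h$ that no positive $s^*$ solves (\ref{f(s*= h}); the paper deliberately does not restrict to that regime (see the discussion after (\ref{s-< s* < s+}) and in Section~\ref{SECTION improvement}). Since you bound the $\bm{v}_\perp$-integral by a constant regardless of $\Delta$, the remark is not load-bearing and your proof remains valid for all $\gamma\in(0,\gamma_0)$; but the sentence as written overstates what the lemma gives.
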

%%%%%%%%%%%%%%%%%%%%%%%%%%%%%%%%%%%%%%%%%%%%%%%%%%%%%%%%%%%%%%%%%%%%%%%%%
\begin{proof}
%%%%%%%%%%%%%%%%%%%%%%%%%%%%%%%%%%%%%%%%%%%%%%%%%%%%%%%%%%%%%%%%%%%%%%%%%ox
%%%%%%%%%%%%%%%%%%%%%%%%%%%%%%%%%%%%%%%%%%%%%%%%%%%%%%%%%%%%%%%%%%%%%%%%%
From recollision condition (\ref{recollision conditions}) and using that 
$e^{-\beta v_{0x}^2} - e^{-\beta v_{x}^2} \leq 1 $, in expression (\ref{r+W}), it follows:
%%%%%%%%%%%%%%%%%%%%%%%%%%%%%%%%%%%%%%%%%%%%%%%%%%%%%%%%%%%%%%%%%%%%%%%%%
\begin{gather}
\label{r+ v_y only}
r_W^+(t) \leq C \int \limits_{\overline{W_t} }^{W(t)} dv_x \ (W(t)-v_x)^2 
				\int \limits_{\mathbb{R}^2 } d\bm{v}_{\zeroped{\perp}} \ e^{-\beta \bm{v}_{\zeroped{\perp}}^2} \,
				\chi ( \{\, \lvert \bm{v}_{\zeroped{\perp}} \rvert \, \Delta(s,t) \leq 2R \, \} ).
\end{gather}
%%%%%%%%%%%%%%%%%%%%%%%%%%%%%%%%%%%%%%%%%%%%%%%%%%%%%%%%%%%%%%%%%%%%%%%%%
As we said, if $\gamma$ is too small, no positive $s^*>0$ solves eq. (\ref{f(s*= h}).\\
In this case $\Delta(s,t)=0 \quad \forall s > 0 $, consequently
%%%%%%%%%%%%%%%%%%%%%%%%%%%%%%%%%%%%%%%%%%%%%%%%%%%%%%%%%%%%%%%%%%%%%%%%%
\begin{align}
\label{int e^-bv_y}
\int \limits_{\mathbb{R}^2 } d\bm{v}_{\zeroped{\perp}} \ e^{-\beta \bm{v}_{\zeroped{\perp}}^2} \,
		\chi ( \{\, \lvert \bm{v}_{\zeroped{\perp}} \rvert \, \Delta(s,t) \leq 2R \, \} )
		= \int \limits_{\mathbb{R}^2 } d\bm{v}_{\zeroped{\perp}} \ e^{-\beta \bm{v}_{\zeroped{\perp}}^2}
		\leq C
\end{align}

and from properties of $ W(t) $ in Lemma \ref{lemmi W_s,t} we immediately obtain
\begin{gather}
\label{r+ v_y only}
r_W^+(t) \leq C \int \limits_{\overline{W_t} }^{W(t)} dv_x \ (W(t)-v_x)^2 \leq C \frac{(\gamma+A_+ \gamma^3)^3}{ (1+t)^3 } .
\end{gather}
%%%%%%%%%%%%%%%%%%%%%%%%%%%%%%%%%%%%%%%%%%%%%%%%%%%%%%%%%%%%%%%%%%%%%%%%%
Instead, if $s^*>0$ exists, a finer analysis is needed.
First we study (\ref{r+}) for large $t$, in particular $ s^* \leq \frac{t}{2}$, this can be  done since $s^*$ was proven to be bounded in $t$.\\
We will analyse the contribution to $r_W^+$ coming from different region of $v_x$.\\
The first region is $ \frac{t}{2} \leq s(v_x) $, where $\Delta(s,t)=0 $ and
(\ref{int e^-bv_y} ) still holds.
%%%%%%%%%%%%%%%%%%%%%%%%%%%%%%%%%%%%%%%%%%%%%%%%%%%%%%%%%%%%%%%%%%%%%%%%%

%%%%%%%%%%%%%%%%%%%%%%%%%%%%%%%%%%%%%%%%%%%%%%%%%%%%%%%%%%%%%%%%%%%%%%%%%

%%%%%%%%%%%%%%%%%%%%%%%%%%%%%%%%%%%%%%%%%%%%%%%%%%%%%%%%%%%%%%%%%%%%%%%%%
Moreover in this region from (\ref{vx = W_s,t}) and (\ref{Omega}) it yields: 
\begin{align}
W(t)-v_x &=\frac{1}{t-s} \int_s^t ( W(t)-W(\tau) ) d\tau \notag \\
		 &\leq\frac{1}{t-s} \int_s^t ( V_\infty - W(\tau) ) d\tau \notag \\
		 &\leq \frac{1}{t-s} \int_s^t \frac{ A_+ \gamma^3}{(1+\tau)^\alpha} +\gamma e^{-C_+ \tau} d\tau
		 \notag \\
		 &\leq \gamma e^{-C_+ s} + \frac{ A_+ \gamma^3}{(1+s)^\alpha}.
\end{align}
%%%%%%%%%%%%%%%%%%%%%%%%%%%%%%%%%%%%%%%%%%%%%%%%%%%%%%%%%%%%%%%%%%%%%%%%%
Since
%%%%%%%%%%%%%%%%%%%%%%%%%%%%%%%%%%%%%%%%%%%%%%%%%%%%%%%%%%%%%%%%%%%%%%%%%
\begin{gather*}
e^{-C_+ s} \leq C \frac{1}{(1+s)^\alpha} \quad \text{for some} \, C>0 \quad \text{and} \quad
\frac{1}{(1+s)^\alpha} < \frac{2^\alpha}{(1+t)^\alpha}, 
\end{gather*}
%%%%%%%%%%%%%%%%%%%%%%%%%%%%%%%%%%%%%%%%%%%%%%%%%%%%%%%%%%%%%%%%%%%%%%%%%
it follows that
%%%%%%%%%%%%%%%%%%%%%%%%%%%%%%%%%%%%%%%%%%%%%%%%%%%%%%%%%%%%%%%%%%%%%%%%%
\begin{gather}
W(t)-v_x \leq C \frac{A_+ \gamma^3 +\gamma}{(1+t)^\alpha}.
\end{gather}
%%%%%%%%%%%%%%%%%%%%%%%%%%%%%%%%%%%%%%%%%%%%%%%%%%%%%%%%%%%%%%%%%%%%%%%%%
Therefore the first contribution to $r_W^+$:
\begin{align}
\label{1st contribution}
&	C \int\limits_{\frac{t}{2} \leq s(v_x) \leq t} dv_x \ (W(t)-v_x)^2   \notag \\
&	\leq C \int_{0}^\infty dv_x \ (W(t)-v_x)^2  \, 
	  \chi \{ \, W(t)-\frac{A_+ \gamma^3 +\gamma}{(1+t)^\alpha} \leq v_x \leq W(t) \, \}  \notag \\
&	\leq C \Bigl [\frac{A_+ \gamma^3 +\gamma}{(1+t)^\alpha} \Bigl ]^3.
\end{align}
%%%%%%%%%%%%%%%%%%%%%%%%%%%%%%%%%%%%%%%%%%%%%%%%%%%%%%%%%%%%%%%%%%%%%%%%%
Similarly for $v_x$ such that $ s^* \leq  s(v_x)\leq \frac{t}{2} $ relation (\ref{int e^-bv_y}) still holds.
On the other hand in this region  it can't be found for $(W(t)-v_x)$ an upper bound finer than
\begin{align}
W(t)-v_x = W(t) -\overline{W_{s,t}} \leq W(t) -\overline{W_{t}} \leq
	 \frac{C}{(1+t)}(\gamma+ A_+ \gamma^3)
\end{align}
%%%%%%%%%%%%%%%%%%%%%%%%%%%%%%%%%%%%%%%%%%%%%%%%%%%%%%%%%%%%%%%%%%%%%%%%%
which holds in general by (\ref{Lemmas properties}). For this reason the second contribution reads 
%%%%%%%%%%%%%%%%%%%%%%%%%%%%%%%%%%%%%%%%%%%%%%%%%%%%%%%%%%%%%%%%%%%%%%%%%
\begin{align}
\label{2nd contribution}
&	C \int\limits_{ s^* \leq  s(v_x)\leq \frac{t}{2} } dv_x \ (W(t)-v_x)^2 
	\leq C \int\limits_{ \overline{W_{t}} }^{W(t)} dv_x \ (W(t)-v_x)^2  \notag \\
&	\leq C \, (W(t)-\overline{W_t})^3 \leq C \Bigl [\frac{A_+ \gamma^3 +\gamma}{(1+t)} \Bigl ]^3.
\end{align}
%%%%%%%%%%%%%%%%%%%%%%%%%%%%%%%%%%%%%%%%%%%%%%%%%%%%%%%%%%%%%%%%%%%%%%%%%
As last contribution, $s(v_x) \leq s^*$, we have
%%%%%%%%%%%%%%%%%%%%%%%%%%%%%%%%%%%%%%%%%%%%%%%%%%%%%%%%%%%%%%%%%%%%%%%%%
\begin{align}
\label{3rd contribution first expression}
&	C \int \limits_{s(v_x) \leq s^*} dv_x \ (W(t)-v_x)^2 
				\int \limits_{\mathbb{R}^2 } d\bm{v}_{\zeroped{\perp}} \ e^{-\beta \bm{v}_{\zeroped{\perp}}^2} \,
				\chi ( \{\, \lvert \bm{v}_{\zeroped{\perp}} \rvert \leq \frac{2R}{\Delta(s,t)} \} ).
\end{align}
%%%%%%%%%%%%%%%%%%%%%%%%%%%%%%%%%%%%%%%%%%%%%%%%%%%%%%%%%%%%%%%%%%%%%%%%%
In this region too,  the best upper bound that can be found for the first integral in 
(\ref{3rd contribution first expression})  is
%%%%%%%%%%%%%%%%%%%%%%%%%%%%%%%%%%%%%%%%%%%%%%%%%%%%%%%%%%%%%%%%%%%%%%%%%
\begin{gather}
W(t)-v_x  \leq \frac{C}{(1+t)}(\gamma+ A_+ \gamma^3) \notag.
\end{gather}
%%%%%%%%%%%%%%%%%%%%%%%%%%%%%%%%%%%%%%%%%%%%%%%%%%%%%%%%%%%%%%%%%%%%%%%%%
Now, $\Delta(s,t)$ can be hard to estimate.
Nevertheless so far the dominant contribution \footnote{Remember $\alpha > 1 $ so that the power of decaying in (\ref{1st contribution}) is $ 3\alpha > 3$ } in our computation is (\ref{2nd contribution}), which expresses a behaviour as $t^{-3}$ for $r_W^+$.
We can thus use a rough upper bound for the second term in 
(\ref{3rd contribution first expression}), namely
%%%%%%%%%%%%%%%%%%%%%%%%%%%%%%%%%%%%%%%%%%%%%%%%%%%%%%%%%%%%%%%%%%%%%%%%%
\begin{gather}
\int \limits_{\mathbb{R}^2 } d\bm{v}_{\zeroped{\perp}} \ e^{-\beta \bm{v}_{\zeroped{\perp}}^2} \,
				\chi ( \{\, \lvert \bm{v}_{\zeroped{\perp}} \rvert \leq \frac{2R}{\Delta(s,t)} \} )
				\leq \int \limits_{\mathbb{R}^2 } d\bm{v}_{\zeroped{\perp}} \ e^{-\beta \bm{v}_{\zeroped{\perp}}^2} \, \leq C, 
\end{gather}
%%%%%%%%%%%%%%%%%%%%%%%%%%%%%%%%%%%%%%%%%%%%%%%%%%%%%%%%%%%%%%%%%%%%%%%%%
leading to the last contribution:
%%%%%%%%%%%%%%%%%%%%%%%%%%%%%%%%%%%%%%%%%%%%%%%%%%%%%%%%%%%%%%%%%%%%%%%%%
\begin{align}
\label{3rd contribution}
&	C \int\limits_{ s(v_x) \leq s^* } dv_x \ (W(t)-v_x)^2 
	\leq C \int\limits_{ \overline{W_{t}} }^{W(t)} dv_x \ (W(t)-v_x)^2  \notag \\
&	\leq C \Bigl [\frac{A_+ \gamma^3 +\gamma}{(1+t)} \Bigl ]^3.
\end{align}
%%%%%%%%%%%%%%%%%%%%%%%%%%%%%%%%%%%%%%%%%%%%%%%%%%%%%%%%%%%%%%%%%%%%%%%%%
%%%%%%%%%%%%%%%%%%%%%%%%%%%%%%%%%%%%%%%%%%%%%%%%%%%%%%%%%%%%%%%%%%%%%%%%%
Any finer estimation regarding the second integral in (\ref{3rd contribution first expression}) would only lead to a power decay greater than $t^{-3}$,
but this would be useless right because of the dominant contribution (\ref{2nd contribution}).
Finally, collecting (\ref{1st contribution}), (\ref{2nd contribution}) and (\ref{3rd contribution}) we obtain (\ref{r_+ behaviour lemma}) for $t \geq 2s^*$.

For the remaining cases: $s^* \geq \frac{t}{2}$ and $s^* \geq t$, by the same arguments presented above, directly from (\ref{r+ v_y only}) it is straightforward to show that
\begin{gather}
r_W^+(t) \leq C \int \limits_{\overline{W_t} }^{W(t)} dv_x \ (W(t)-v_x)^2 
		\leq C \Bigl [\frac{A_+ \gamma^3 +\gamma}{(1+t)} \Bigl ]^3,
\end{gather}
which concludes the first part of the proof.\\

As far as $ r^{-}(t) $ is concerned the back of the body is simply disk shaped and we proceed in a similar way of as \cite{nostro}.

Clearly, $r_W ^-(t)=0$ for $t<t_0$ where $W$ is increasing. Instead, for $t>t_0$, the recollision condition on $ v_x $ is the same as eq.(\ref{v_x > W_t}): $v_x=\overline{W_{s,t}}<V_{\infty}$. 
Using again $ e^{-\beta v_{x}^2} -e^{- \beta v_{0x}^2}<1$ for back recollisions in eq(\ref{r-W}) we obtain
\begin{align}
r_W ^-(t) & \leq C \int _{W(t)} ^{V_{\infty}} dv_x (v_x-W)^2 = C(V_{\infty}-W)^3 \notag \\
          & \leq C \bigl ( \gamma e^{-C_+ t }+ \frac{A \gamma ^3}{(1+t)^{\alpha}} \bigr )^3 \leq C \frac{(\gamma + A \gamma ^3)^3}{(1+t)^{3 \alpha}},
\end{align}

which concludes the proof.
\end{proof}
As a direct consequence of this lemma  we have that:
\begin{gather}
\label{r+ + r- <}
r_W ^+(t)+ r_W^-(t) \leq C \frac{(\gamma + A \gamma ^3)^3}{(1+t)^{3}}
\end{gather}
which holds because $ \frac{1}{(1+t)^{3\alpha}} \leq \frac{1}{(1+t)^3}  $ for $ \alpha >1  $.
\\

We go on the next step in proving Theorem $ 1 $.
The next proposition shows the behaviour of the map $ W \to V_W $ (see \ref{map}).

\begin{prop}
\label{PROPOSITION}
Let $ W\in \Omega_\alpha $,\, then $ V_W \in \Omega_3  $.
\end{prop}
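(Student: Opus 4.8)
The plan is to convert the integro-differential equation \eqref{d dt V_W = E -FW -rW} defining $V_W$ into an explicit Duhamel formula for the gap $u(t):=V_\infty-V_W(t)$ and then read off the three defining properties of $\Omega_3$ directly from it. Setting
\begin{equation*}
b_W(t):=\frac{E-F_0(W(t))}{V_\infty-W(t)}=\frac{F_0(V_\infty)-F_0(W(t))}{V_\infty-W(t)},\qquad R(t):=r_W^+(t)+r_W^-(t)\ge 0,
\end{equation*}
(the second form uses $F_0(V_\infty)=E$), equation \eqref{d dt V_W = E -FW -rW} reads $\dot u=-b_W(t)u+R(t)$, $u(0)=\gamma$, a linear ODE whose solution
\begin{equation*}
u(t)=\gamma\,e^{-\int_0^t b_W}+\int_0^t e^{-\int_s^t b_W}\,R(s)\,ds
\end{equation*}
simultaneously gives existence of $V_W$. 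The one structural input I would use is that $W(t)<V_\infty$ (by \eqref{inequality a}) makes $b_W$ well defined, and that, since $W(t)\in[V_0,V_\infty]$ while $F_0$ is increasing and convex on $(0,\infty)$, the mean value theorem yields $b_W(t)=F_0'(\xi)$ with $\xi\in[V_0,V_\infty]$, hence $C_+\le b_W(t)\le C_-$ for a.e.\ $t$.

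From here the lower bound \eqref{inequality a} for $V_W$ is immediate: dropping the nonnegative Duhamel integral and using $\int_0^t b_W\le C_- t$ gives $u(t)\ge\gamma e^{-C_- t}$. For the upper bound \eqref{inequality b} with $\alpha=3$ I would use $\int_s^t b_W\ge C_+(t-s)$ together with the recollision estimate \eqref{r+ + r- <}, which for $\gamma$ small reads $R(s)\le 2C\gamma^3(1+s)^{-3}$ (the factor $(1+A_+\gamma^2)^3$ hidden in $(\gamma+A_+\gamma^3)^3$ is $\le 2$ once $\gamma<\gamma_0(A_+)$), and then the convolution bound
\begin{equation*}
\int_0^t e^{-C_+(t-s)}\,\frac{ds}{(1+s)^3}\le\frac{K}{(1+t)^3},
\end{equation*}
proved by splitting at $s=t/2$ (exponential decay dominates the polynomial on $[0,t/2]$, while $(1+s)^{-3}\le 8(1+t)^{-3}$ on $[t/2,t]$ and the exponential integrates to $\le 1/C_+$). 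This gives $u(t)\le\gamma e^{-C_+t}+2CK\gamma^3(1+t)^{-3}$, and one then fixes $A_+:=2CK$. This is the delicate bookkeeping point: $A_+$ is pinned as a universal constant built from the constants of the recollision lemma and of the convolution bound, after which $\gamma_0$ is shrunk in terms of $A_+$; there is no circularity since $C$ and $K$ depend on neither $A_+$ nor $\gamma$.

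Finally I would check the monotonicity condition in \eqref{Omega} and the range $V_W(t)\in[V_0,V_\infty]$. Writing $\dot V_W(t)=b_W(t)u(t)-R(t)$, using the lower bound $u(t)\ge\gamma e^{-C_- t}$ just obtained and $b_W\ge C_+$, for $t\in[0,t_0]$ one gets $\dot V_W(t)\ge C_+\gamma e^{-C_- t_0}-C\gamma^3=\sqrt{C_+}\,\gamma^{3/2}-C\gamma^3$, since $C_- t_0=\tfrac12\log(C_+/\gamma)$; this is $>0$ for $\gamma$ small, which is the monotonicity condition and, with $V_W(0)=V_0$, gives $V_W\ge V_0$ on $[0,t_0]$. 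The bound $u(t)\ge\gamma e^{-C_- t}>0$ gives $V_W(t)<V_\infty$ for all $t$, and for $t>t_0$ the upper bound yields $u(t)\le\gamma e^{-C_+ t_0}+A_+\gamma^3<\gamma$ for $\gamma$ small (because $e^{-C_+ t_0}=(\gamma/C_+)^{C_+/2C_-}\to 0$), so $V_W(t)>V_0$ there as well. The main obstacle is not any individual estimate but precisely the consistent choice of $A_+$ and $\gamma_0$ described above, together with verifying the range condition on all of $[0,\infty)$ rather than only on $[0,t_0]$.
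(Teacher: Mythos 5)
Your proposal is correct and follows the paper's own route: the Duhamel formula for $V_\infty-V_W$, the convexity bound $C_+\le K(t)\le C_-$, the recollision estimate (\ref{r+ + r- <}) inserted into the Duhamel integral, the convolution lemma giving $(1+t)^{-3}$ decay, and finally fixing $A_+$ as a constant built from the recollision and convolution constants before shrinking $\gamma_0$. The only differences are cosmetic: you prove the convolution bound by splitting at $t/2$ where the paper uses uniform boundedness plus L'H\^opital, and you deduce $V_W\ge V_0$ from monotonicity on $[0,t_0]$ and the upper bound for $t>t_0$, where the paper runs a Gronwall-type inequality on $V_W-V_0$.
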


\begin{proof}
First of all, we write the solution to eq. (\ref{d dt V_W = E -FW -rW}) by means of Duhamel formula:
\begin{gather} \label{Duhamel}
V_{\infty}- V_W= \gamma e^{-\int _0 ^t K(\tau)d\tau}+ \int _0 ^t ds \ e ^{-\int _s ^t K(\tau)d\tau}(r_W^+(s) +r_W ^-(s))
\end{gather}
where 
\begin{align*}
K(t)= \frac{E-F_0(W(t))}{V_{\infty}-W(t)}.
\end{align*}
Reminding that $  V_0 \leq W(t)< V_{\infty}$ and  being $F_0$ a convex function, we get
\begin{align*}
  C_+ = F_0'(V_0) \leq K(t)   < F_0 '(V_{\infty})=C_- .
\end{align*}

By the positivity of $r_W ^+$ and $r_W ^-$, formula (\ref{Duhamel}) implies

\begin{gather}
V_{\infty}-V_W \geq \gamma e^{-\int _0 ^t K(\tau)d\tau} \geq \gamma e^{-C_- t}.
\end{gather}
which is the first property of the set $ \Omega_{\alpha}$ ; in particular we obtain $V_W (t)< V_{\infty} $ for any $t>0$.

Then, using again (\ref{d dt V_W = E -FW -rW}), we have
\begin{align*}
\frac{d}{dt}(V_W-V_0) &=K(t)(V_{\infty}-V_W )-r_W^+-r_W^- \\ 
& \geq C_+ (V_{\infty}-V_W )-C \gamma ^3   \\
&= C_+ \gamma +C_+(V_0-V_W)-C \gamma ^3 \\
& \geq -C_+(V_W-V_0)
\end{align*}
the last inequality holding for $ \gamma $ small enough; from this we get $ V_W (t) \geq V_0$ .

%
%Taken out appendix  on diff inequalities
%
%from Appendix (\ref{append diff ineq}) is :
%
%( see Appendix \ref{append diff ineq})

Moreover
\begin{align*}
\frac{d}{dt}(V_{\infty}-V_W)  \leq &-C_+(V_\infty - V_W)+r_W^+ +r_W ^-  \leq \\
&-C_+(V_\infty - V_W) + C\gamma^3 \leq \\
& C(-C_+\gamma e^{-C_-t}+ \gamma^2 ),
\end{align*}
thus the third property of $ \Omega_\alpha $
\begin{gather*}
\frac{d}{dt}(V_{\infty}-V_W) <0 \, ; \quad \forall \, t<\frac{1}{2C_-}log(\frac{C_+}{\gamma}).
\end{gather*}

In the end, from (\ref{Duhamel}) we have
\begin{gather*}
V_{\infty}- V_W \leq 
\gamma e^{-C_+ t} + \int _0 ^t ds \ e ^{-C_+ (t-s)}C 
\frac{(\gamma + A_+ \gamma ^3)^3}{(1+s)^{3}} = \\
\gamma e^{-C_+ t} + C(\gamma + A_+ \gamma ^3)^3 e ^{-C_+ t} \int _0 ^t ds  
\frac{e ^{C_+ s} }{(1+s)^3} 
\end{gather*}
We now show the following estimate:
\begin{equation*}
g(t)= e^{-C_+ t}\int _0 ^t \frac{e ^{C_+s}}{(1+s)^{3}} ds \leq \frac{C}{(1+t)^{3}}.
\end{equation*}
we can do that by proving  that: 
\begin{equation*}
g(t) \leq C  \, ; \quad \forall \, t \geq 0
\end{equation*}
and
\begin{equation*}
\lim _{t \to \infty} \frac{e^{-C_+ t} \int _0 ^t e ^{C_+s}(1+s)^{-3} ds}{(1+t)^{-3}}= C.
\end{equation*}
The first comes from direct inspection:
\begin{equation*}
g(t) \leq e^{-C_+ t}e^{C_+ t}\int _0 ^t \frac{1}{(1+s)^{3}} ds  \leq   \int _0 ^{\infty} \frac{1}{(1+s)^{3}}=C 
\end{equation*}

The second can be easily computed, for instance by  De L'Hopital:
\begin{equation*}
\lim _{t \to \infty} \frac{\int _0 ^t e ^{C_+s}(1+s)^{-3} ds}{ e^{C_+ t}  (1+t)^{-3}}=
 \lim _{t \to \infty}\frac{1}{(C_+-\frac{3}{1+t})} = C
\end{equation*}

To conclude there exists a constant $ \bar{C} $ :
\begin{equation*}
V_{\infty}- V_W \leq \gamma e^{-C_+ t}+ \bar{C}\frac{(\gamma + A_+ \gamma ^3)^3}{(1+t)^{3}}. 
\end{equation*}
Now, to obtain the second property of $ \Omega_\alpha $ ,  with $ \alpha =3  \,$ it is sufficient that
\begin{equation}
\bar{C}(\gamma + A_+ \gamma ^3)^3 <A_+ \gamma ^3 \notag
\end{equation}
The last inequality is satisfied by choosing  $A_+=2\bar{C}$ (this fixes the constant $A_+$) and $\gamma$ consequently small.
\end{proof}

We can now follow the proof in \cite{E=0} to show the existence and hence to prove Theorem $ 3.1 $.  We showed in proposition \ref{PROPOSITION}  that 
$ \mathcal{F} \, : \Omega_3 \to  \Omega_3 $, where $ \mathcal{F} $ is the map (\ref{map}). Consider the set $ \mathcal{K} $ of  functions $ W \in \Omega_3 $ enjoying the property:
\begin{gather}
\mathcal{K} = \{ W\in \Omega_3 \, \big |  \, ess\, sup_{t \in \mathbb{R^+}}(|W(t)|+|\dot{W}(t)|)=L < \infty \}
\end{gather}
It results that $ \mathcal{K} $ is compact and convex  .
The map (\ref{map}) $ W \to V_W  $ from $ \mathcal{K} $ to itself is a continuous map as showed  in \cite{nostro} ( to which we refer only to show the continuity of the map, being the existence proof there not complete), then by the Schauder fixed point theorem 
$ \mathcal{F} $ has at least a fixed point in $ \mathcal{K} $, that is our seeked solution.

To conclude the proof of Theorem 3.1  we consider any solution $ (V,f) $ of the problem (\ref{vlasov})-(\ref{d/dt V = E- F(t)}).
By continuity there exists a time interval for which
\begin{gather}
\label{ogni soluzione ha ..}
V_{\infty}- V(t) < \gamma e^{-C_+ t}+ C\frac{A_+}{(1+t)^{3}}\gamma ^3, 
\end{gather}
as it is clearly valid at time zero.
Let $ T $ be the first time this inequality is violated,
by the same arguments of Proposition (\ref{PROPOSITION}) (replacing $ W $ by $ V $) we have:
\begin{gather}
V_{\infty}-V(t) \geq \gamma e^{-C_- t}.
\end{gather}
and 
\begin{gather}
\frac{d}{dt} (V_{\infty}-V(t) ) \leq 0.
\end{gather}
for $ t \in [0,min(t_0,t)] $.
Since $ V $ enjoys the same properties as $ W $ for $ t \in [0,T)  $ we infer that eq.(\ref{ogni soluzione ha ..})  holds globally in time.
This concludes the proof of Theorem 3.1.

%%%%%%%%%%%%%%%%%%%%%%%%%%%%%%%%%%%%%%%%%%%%%%%%%%%%%%%%%%%%%%
%%%%%%%%%%%%%%%%%%%%%%%%%%%%%%%%%%%%%%%%%%%%%%%%%%%%%%%%%%%%%%%%%%%%
%%%%%%%%%%%%%%%%%%%%%%%%%%%%%%%%%%%%%%%%%%%%%%%%%%%%%%%%%%%%%%
%%%%%%%%%%%%%%%%%%%%%%%%%%%%%%%%%%%%%%%%%%%%%%%%%%%%%%%%%%%%%%%%%%%%
%%%%%%%%%%%%%%%%%%%%%%%%%%%%%%%%%%%%%%%%%%%%%%%%%%%%%%%%%%%%%%
%%%%%%%%%%%%%%%%%%%%%%%%%%%%%%%%%%%%%%%%%%%%%%%%%%%%%%%%%%%%%%%%%%%%
%%%%%%%%%%%%%%%%%%%%%%%%%%%%%%%%%%%%%%%%%%%%%%%%%%%%%%%%%%%%%%
%%%%%%%%%%%%%%%%%%%%%%%%%%%%%%%%%%%%%%%%%%%%%%%%%%%%%%%%%%%%%%%%%%%%

\subsection{Improvement of the lower bound}
\label{SECTION improvement}

Before starting our analysis we remark that following the same steps taken in Section \ref{SECTION kinematic issues} (replacing $ W $ by $ V $), we arrive at:
\begin{gather}
s_- (\gamma) \leq s^* \leq s_+ (\gamma).
\end{gather}
with
\begin{gather}
s_- (\gamma) = \frac{1}{C_-} log \frac{\gamma}{hC_-} \, ; \quad  s_+ (\gamma) = \sqrt{\frac{C \gamma}{h}}-1
\end{gather}

We present here the strategy  to prove Theorem 3.2.
Theorem 3.1 proved that our solution is bounded as
 \begin{gather*}
 \gamma e^{-C_- t} \leq V_{\infty}- V(t) \leq 
 \gamma e^{-C_+ t}+ \frac{A_+}{(1+t)^{3}}\gamma ^3, 
\end{gather*}

by usual comparison argument, using that $ r^-(t) \geq 0 $,
\begin{gather}
\frac{d}{dt}(V_{\infty}- V(t)) \geq -C_- (V_{\infty}- V(t)) + r^+(t),
\end{gather}
then by Duhamel expression 
\begin{gather}
\label{dhuam finale}
V_{\infty}- V(t) \geq \gamma e^{-C_- t} + \int_ {0}^{t} ds e^{-C_-(t-s) }r^+(s)
\end{gather}

Exploiting $ r^+(t) >0   $ sends us back to the main properties shown in Theorem 3.1.
We thus look for a finer lower bound to $ r^+(t)  $ for large $ t $; in order to achieve this new bound we integrate over velocities producing a single recollision in the past.

Let's consider the recollision condition along x-axis ($ v_x$ is the velocity of the particle before the impact at time $ t $):
\begin{gather*}
\exists \, s \leq t \, : \quad  v_x= \overline{V_{s,t}}.
\end{gather*} 
A further recollision at $ \tau < s $ happens if (the velocity of the particle before the impact at $ s $ is $ 2V(s) -v_x $)
\begin{gather}
\exists \, \tau \leq s \, : \quad  2V(s) -v_x = \overline{V_{\tau,s}},
\end{gather}
now, $ \overline{V_{s,t}} >V_0 $, therefore in order to have only one recollision a sufficient condition is 
\begin{gather*}
\exists \, s \leq t \, : \quad  v_x= \overline{V_{s,t}}. \\
2V(s) -v_x  < V_0
\end{gather*}
In a compact formulation, the region of single recollision times is 
\begin{gather}
\{s \in (0,t)  \, : \quad q(s)<0  \}, \\
q(s) = 2V(s)  - \overline{V_{s,t}} - V_0,
\end{gather}
evidently
\begin{gather}
q(0) = V_0 -\overline{V_{t}} <0 \\
q(t) = V(t)-V_0 >0,
\end{gather}
then there exists $ s_0 > 0 $, the smallest solution to $ q(s_0) = 0 $ :
\begin{gather}
\label{def s_0} 
V(s_0)=\frac{\overline{V_{s_0,t}}+V_0}{2} .
\end{gather}
so that:
\begin{gather}
q(s)<0  \, : \quad \forall \, 0< s < s_0.
\end{gather}

Before estimating recollision terms we  show that, for $t$ sufficiently large and $\gamma$ sufficiently small,
%%%%%%%%%%%%%%%%%%%%%%%%%%%%%%%%%%%%%%%%%%%%%%%%%%%%%%%%%%%%%%%%%%%%%%%%%%%%%%%%%%% 
\begin{align}
m_1 \leq s_0 \leq m_2,
\end{align} 
where $m_1=\frac{1}{C-}log \frac{3}{2}$ and $m_2=\frac{1}{C+}log4$.

The general property of the solution computed at $ s_0 $ reads
\begin{gather}
\gamma e^{-C_- s_0}  \leq  \ V_{\infty}-V(s_0) \  \leq \gamma e^{-C_+s_0} +\frac{A \gamma ^3}{(1+s_0)^3}
\end{gather}
where, using (\ref{def s_0}), 
\begin{gather}
V_{\infty}-V(s_0) = \frac{\gamma}{2} + \frac{V_{\infty} - \overline{V_{s_0,t}}}{2}.
\end{gather}
Hence, for $\gamma$ sufficiently small
\begin{align}
\gamma e^{-C_+ s_0} & \geq \frac{\gamma}{2} + \frac{V_{\infty} - \overline{V_{s_0,t}}}{2} - \frac{A \gamma ^3}{(1+s_0)^3} \notag \\ & \geq \frac{\gamma}{2} - \frac{A \gamma ^3}{(1+s_0)^3}  \geq \frac{\gamma}{2} - A \gamma ^3 \geq \frac{\gamma}{4}, \notag
\end{align}
leading to $ m_2=\frac{1}{C+}log4  $.\\
For the lower bound,
\begin{align*}
& \gamma e^{-C_-s_0}  \leq \frac{\gamma}{2} +\frac{V_{\infty} -\overline{V_{s_0 t}} }{2} = \\
& = \frac{\gamma}{2} +\frac{1}{2(t-s_0)} \int_{s_0}^t V_{\infty}-V(\tau) \ d\tau  \\
 & \leq \frac{\gamma}{2} + \frac{1}{2(t-s_0)} \int_{0}^{\infty} \gamma e^{-C_+ \tau} +\frac{A \gamma ^3}{(1+ \tau)^3} \ d\tau  \\ 
 & \leq \frac{\gamma}{2} + C \frac{\gamma +A \gamma^3}{2(t-s_0)} \\
 &  \leq \frac{\gamma}{2} + C \frac{\gamma +A \gamma^3}{2(t-m_2 )}  \leq \frac{2}{3} \gamma , 
\end{align*}
which holds for a fixed $\gamma$ sufficiently small and $t$ sufficiently large, and the proof is concluded. \\ 
%%%%%%%%%%%%%%%%%%%%%%%%%%%%%%%%%%%%%%%%%%%%%%%%%%%%%%%%%%%%%%%%%%%%%
So the desired lower bound is expressed by:
\begin{gather}
r^+(t) =C \int\limits _{ | \bm{x}_{\zeroped{\perp}} | \leq R } d\bm{x}_{\zeroped{\perp}} 
\int\limits_{\overline{V_t}} ^{V(t)} dv_x (v_x -V(t))^2 
\int \limits_{\mathbb{R}^2} d\bm{v}_{\zeroped{\perp}} e^{- \beta \bm{v}_{\zeroped{\perp}}^2}(e^{- \beta v_{0x}^2} - e^{- \beta v_x^2}) \geq  \notag \\
\label{r+ >=}
\int\limits_{ | \bm{x}_{\zeroped{\perp}} | \leq R } d\bm{x}_{\zeroped{\perp}}  
\int\limits_{\overline{V_t}} ^{\overline{V_{s_0,t}}} dv_x (v_x -V(t))^2 
\int \limits_{\mathbb{R}^2} d\bm{v}_{\zeroped{\perp}} e^{- \beta \bm{v}_{\zeroped{\perp}}^2}(e^{- \beta v_{0x}^2} - e^{- \beta v_x^2}).
\end{gather}

In the following we will prove these inequalities:
\begin{gather}
\label{e-e > gamma}
(e^{- \beta v_{0x}^2} - e^{- \beta v_x^2}) \geq C \gamma \\
\label{V(t)-v_x>}
V(t)-v_x \geq C \frac{\gamma}{t} \\
\label{V_(s0 t) - V_(t) >}
\overline{V_{s_0,t}} -\overline{V_t} > C \frac{\gamma}{t}
\end{gather}
First, setting $ j(v)=  e^{- \beta v } $, we can write
\begin{gather*}
(e^{- \beta v_{0x}^2} - e^{- \beta v_x^2}) = -( j(v_x^2)- j(v_{\zeroped{0}x}^2) ) =  \\
-j'(\eta) (v_x ^2 - v_{\zeroped{0}x}^2) = |j'(\eta)| (v_x ^2 - v_{\zeroped{0}x}^2)
 \, ; \quad \eta \in [v_{\zeroped{0}x}^2,v_x^2] 
\end{gather*}
 Since $v_x^2$ and $v_{\zeroped{0}x}^2= (2V(s)-v_x)^2$ are bounded,   $  |j'(\eta)|  $ is bounded in $ \eta$, hence
\begin{gather}
(e^{- \beta v_{0x}^2} - e^{- \beta v_x^2}) \geq C (v_x ^2 - v_{\zeroped{0}x}^2)  
\end{gather}
Now, if $ \bm{v}_{\zeroped{\perp}} $ doesn't bring recollision $ v_{\zeroped{0}x} \equiv v_x $
and we are back to $  (e^{- \beta v_{0x}^2} - e^{- \beta v_x^2}) > 0 $, thus provided that $ \bm{v}_{\zeroped{\perp}} $ belongs to recollision region we have
\begin{gather*}
(v_x ^2 - v_{\zeroped{0}x}^2) = v_x^2 -(2V(s)-v_x)^2 = \\
C V(s)(v_x-V(s))> C V(0)(v_x -V(s)),
\end{gather*}
in the region $ 0<s < s_0 $ it yields 
\begin{gather*}
v_x-V(s) \geq v_x -\frac{V_0+\overline{V_{s,t}} }{2}=\frac{\overline{V_{s,t}}-V_0 }{2} 
\end{gather*}
leading to
\begin{gather}
(e^{- \beta v_{0x}^2} - e^{- \beta v_x^2}) \geq C (\overline{V_{s,t}} - V_0) ; 
\end{gather}
we compute the last term as follows
\begin{align*}
 \overline{V_{s,t}}-V_0  \geq & \overline{V_{t}}-V_0                                                                               = \frac{1}{t} \int _0 ^t V(\tau)-V_0 \ d\tau = \\
      &   = \gamma  -\frac{1}{t} \int _0 ^t V_{\infty} -V(\tau) d \tau  \\
 &  \geq \gamma -\frac{1}{t} \int _0 ^t \gamma e^{-C_+{\tau}} +\frac{A_+ \gamma ^3}{(1+\tau)^3} d\tau  \\
 & \geq \gamma -\frac{1}{t} \int _0 ^{\infty} \gamma e^{-C_+{\tau}} +\frac{A_+ \gamma ^3}{(1+\tau)^3} d\tau \notag \\
 &  \geq \gamma -\frac{C}{t} (\gamma + A_+ \gamma ^3) \geq \frac{\gamma}{2} 
\end{align*}
the last inequality holding for fixed $\gamma$ and $t$ sufficiently large, so that 
(\ref{e-e > gamma}) is proved.
\\
We proceed in computing   (\ref{V(t)-v_x>}) :
\begin{align*}
V(t)-v_x  & = V(t)-\overline{V_{s,t}} = \frac{1}{t-s} \int_s ^t V(t)-V(\tau)\  d \tau  \\
 &\geq \frac{1}{t} \int_{m_2} ^t V(t)-V(\tau) \ d \tau  \\ 
 & = \frac{1}{t} \int_{m_2} ^t V(t)- V_{\infty}+V_{\infty}-V(\tau) \  d\tau  \\
 & \geq  \frac{1}{t} \int_{m_2} ^t -\gamma e^{-C_+ t} -\frac{A \gamma^3}{(1+t)^3} +\gamma e^{-C_- \tau} d\tau  \\
 & \geq   \frac{1}{t} (t-m_2) (-\gamma e^{- C_+ t} -\frac{A \gamma^3}{(1+t)^3}) -\frac{\gamma}{t} \frac{e^{-C_- t}}{C_-} + \frac{e^{-C_-m_2}}{C_-} \frac{\gamma}{t}  \\
 & \geq \frac{e^{-C_-m_2}}{2C_-} \frac{\gamma}{t} = C \frac{\gamma}{t} 
\end{align*}
the last inequality holding for large $t$, given a fixed $\gamma$.
%%%%%%%%%%%%%%%%%%%%%%
For the last expression first we notice that it can be written as
\begin{gather}
\overline{V_{s_0, t}}-\overline{V_{t}}=\frac{s_0}{t-s_0} ( \overline{V_t}-\overline{V_{s_0}})
\end{gather}
Since $s_0$ is bounded, we note that, for $\gamma$ small, $t_0$ is much larger than $s_0$, so that $V(\tau)$ is increasing for $\tau \leq s_0$. Hence
\begin{align}
  \overline{V_t} - \overline{V_{s_0}}   & =  \frac{1}{t} \int_ 0 ^t V(\tau)d\tau-\frac{1}{s_0} \int_ 0 ^{s_0} V(\tau)d\tau \notag \\
   & \geq \frac{1}{t} \int_ 0 ^t V(\tau)d\tau - V(s_0) \notag \\
   &  = \frac{1}{t} \int_ 0 ^t \Bigl ( V(\tau)-V_{\infty} \Bigr ) d\tau  +V_{\infty}-V(s_0) \notag 
\end{align}    
By using relation (\ref{def s_0}), we observe that 
\begin{equation}
V_{\infty}-V(s_0)= V_{\infty} - \frac{\overline{V_{s_0,t}}+V_0 }{2} \geq V_{\infty} - \frac{V_{\infty}+V_0 }{2}= \frac{\gamma}{2} \notag
\end{equation}
Hence
\begin{gather*} 
\overline{V_t} - \overline{V_{s_0}}      \geq \frac{1}{t} \int_{0}^t ( - \gamma e^{-C_+ \tau} - \frac{A \gamma ^3}{(1+\tau) ^ 3})d\tau +\frac{\gamma}{2} \notag \\ 
                                     \geq \frac{1}{t} \int_{0}^{\infty} ( - \gamma e^{-C_+ \tau} - \frac{A \gamma^3}{(1+\tau) ^3})d\tau +\frac{\gamma}{2} \notag \\
   \geq -C \frac{\gamma + A \gamma ^ 3}{t} + \frac{\gamma}{2}  \geq \frac{\gamma}{4},
\end{gather*}
for $t$ large enough, and finally arrive to (\ref{V_(s0 t) - V_(t) >})
\begin{equation}
\overline{V_{s_0, t}}-\overline{V_{t}}= \frac{s_0}{t-s_0} ( \overline{V_t}-\overline{V_{s_0}}) \geq C \frac{\gamma}{t}.
\end{equation}

At this point we turn our attention to recollision terms; we first analyze the case in which no positive $s^*$ exists. It happens whenever $\gamma$ is so small that $s_+(\gamma)<0$ i.e $\gamma < \frac{h}{C} $. In this case no particles can leave the box, thus $ \bm{v}_{\zeroped{\perp}} $ ranges over all $\mathbb{R}^2$, and by means of inequalities (\ref{e-e > gamma}), (\ref{V(t)-v_x>}) we can finally compute 
(\ref{r+ >=}) :
%%%%%%%%%%%%%%%%%%%%%%%%%%%%%%%%%%%%%%$s^*(\gamma)<0$%%%%%%%%%%%%%%%%%%%%%%%%%%%%%%%%%%%%%%%%%%%%%
\begin{gather*}
r^+(t)\geq \int\limits_{ | \bm{x}_{\zeroped{\perp}} | \leq R } d\bm{x}_{\zeroped{\perp}}  
\int\limits_{\overline{V_t}} ^{\overline{V_{s_0,t}}} dv_x (v_x -V(t))^2 
\int \limits_{\mathbb{R}^2}  d\bm{v}_{\zeroped{\perp}} e^{- \beta \bm{v}_{\zeroped{\perp}}^2}(e^{- \beta v_{0x}^2} - e^{- \beta v_x^2}) \geq  \notag \\
C \int\limits_{\overline{V_t}} ^{\overline{V_{s_0,t}}} dv_x (v_x -V(t))^2 
\int\limits_{\mathbb{R}^2}  d\bm{v}_{\zeroped{\perp}} e^{- \beta \bm{v}_{\zeroped{\perp}}^2}\gamma \geq  \notag \\
C\gamma \int\limits_{\overline{V_t}} ^{\overline{V_{s_0,t}}} dv_x (v_x -V(t))^2 \geq 
C\gamma ( \overline{V_{s_0,t}}- {\overline{V_t}})  \bigl (\frac{\gamma}{t} \bigl)^2,
\end{gather*}
exploiting (\ref{V_(s0 t) - V_(t) >}) we conclude that, for $ t $ sufficiently large, independently of $ \gamma $:
\begin{equation}
r^+(t)\geq C \frac{\gamma ^4}{t^3}
\end{equation}
It is true that for any given $h$ there is always a $ \gamma $ small enough so that $s_+(\gamma)<0$, but physically  this is would be the less interesting case in which the particle is not allowed to escape the box, in other words it would be the same case of a box with endless barriers.

We therefore prove our estimate dropping the hypothesis of $s_+(\gamma)<0$ and  consider the case in which a positive $s^*$ exists, provided smaller than $s_0$.
The last condition is possible by choosing $s_+(\gamma) < m_1/2  $ that is:
$ \frac{h}{C} \leq \gamma \leq \frac{h}{C}(1+ m_1/2)^2$.

We can now compute the lower bound for $ r^+(t) $ in the region of $ s \in (s^*,s_0) $ where there is again no lower bound for $ \bm{v}_{\zeroped{\perp}} $.
%%%%%%%%%%%%%%%%%%%%%%%%%%%%%%%%%%%%%%%%%%%%%%%%%%%%%%%%%%%%%%%%%%%%%%%%%%%%%%%%%%%
\begin{gather}
r^+(t)\geq \int\limits_{ | \bm{x}_{\zeroped{\perp}} | \leq R} d\bm{x}_{\zeroped{\perp}}
 \int\limits_{\overline{V_{s^{*},t}}} ^{\overline{V_{s_0,t}}} dv_x (v_x -V(t))^2 
 \int \limits_{\mathbb{R}^2} d\bm{v}_{\zeroped{\perp}} e^{- \beta \bm{v}_{\zeroped{\perp}}^2}(e^{- \beta v_{0x}^2} - e^{- \beta v_x^2}) .
\end{gather}
%%%%%%%%%%%%%%%%%%%%%%%%%%%%%%%%%%%%%%%%%%%%%%%%%%%%%%%%%%%%%%%%%%%%%%%%%%%%%%%%%%
In this case we have to find a lower bound for the new term
$ \overline{V_{s_0, t}} - \overline{V_{s^*,t}}$.\\
At first, we notice that
\begin{align}
&\overline{V_{s_0, t}}-\overline{V_{s^*,t}}= \notag \\ 
	&=\frac{1}{t-s_0}\int_{s_0}^t V(\tau)d\tau - 
	\frac{1}{t-s^*} \int_{s^*}^t V(\tau)d\tau \notag \\  
 	&= \frac{1}{t-s_0}\int_{s^*}^t V(\tau)d\tau -
 	\frac{1}{t-{s_0}} \int _{s^*}^{s_0} V(\tau)d\tau -
	 \frac{1}{t-s^*} \int _{s^*} ^t V(\tau)d\tau \notag \\ 
  	&=\frac{s_0 -s^*}{(t-s_0)(t-s^*)} \int _{s^*} ^t V(\tau)d\tau -\frac{1}{t-s_0}\int_{s^*}^{s_0} V(\tau)d\tau \\
  	  &=\frac{s_0-s^*}{t-s_0} ( \overline{V_{s^*,t}}-\overline{V_{s^*,s_0}}), \notag \\
\end{align}
 we have that 
\begin{align}
\frac{s_0-s^*}{t-s_0} \geq \frac{m_1}{2t},
\end{align}
since $s_0 \geq m_1$ and $s^* \leq s_+ \leq \frac{m_1}{2}$, while, for t sufficiently large, 
 \begin{align}
\label{aa} (\overline{V_{s^*,t}} - \overline{V_{s^*,s_0}}) \geq \frac{\gamma}{4} .
\end{align}
We now prove the last inequality.

First of all, since $s_0$ is bounded, we note that, for $\gamma$ small, $t_0$ is much larger than $s_0$, so that $V(\tau)$ is increasing for $\tau \leq s_0$. Hence
\begin{align*}
  & \overline{V_{s^*,t}} - \overline{V_{s^*,s_0}} =  \\
  & =\frac{1}{t-s^*} \int_{s^*}^t V(\tau)d\tau -\frac{1}{s_0 - s^*} \int_{s^*}^{s_0} V(\tau)     d\tau  \\
 &  \geq \frac{1}{t-s^*} \int_{s^*}^t V(\tau)d\tau - V(s_0) \notag \\
 & \geq \frac{1}{t-s^*} \int_{s^*}^t ( - \gamma e^{-C_+ \tau} - \frac{A \gamma^3}{(1+\tau) ^3})d\tau +V_{\infty} -V(s_0)  \\
    & \geq \frac{1}{t-s^*} \int_{0}^{\infty} ( - \gamma e^{-C_+ \tau} - \frac{A \gamma^3}{(1+\tau) ^3})d\tau +\frac{\gamma}{2}  \\
  &   \geq \frac{\gamma}{4},
\end{align*}
for $t$ large enough. \\
Therefore we obtain 
\begin{align}
\label{estremi} 
\overline{V_{s_0, t}}-\overline{V_{s^*,t}} \geq \frac{C \gamma}{t}.
\end{align}

In the end even in the case of $ s_+(\gamma) > 0 $, proceeding as before, we conclude that for
 $ t $ sufficiently large, say $ t > {\overline{t}} $ independent of $ \gamma $,
\begin{align}
\label{r end}
r^+(t) \geq C \frac{\gamma ^4}{t^3}.
\end{align}

Eventually the expression (\ref{dhuam finale}) is 
\begin{gather}
V_{\infty}- V(t) \geq \gamma e^{-C_- t} + \int_ {0}^{t} ds e^{-C_-(t-s) }r^+(s)
\geq \\ 
\gamma e^{-C_-t}+ C\int _{\overline{t}} ^t ds \ e ^{-C_-(t-s)} \frac{\gamma ^4}{s^3} \notag
\end{gather}

and we can bound  the integral by showing that:
\begin{equation}
\lim _{t \to \infty} \frac{\int _{\overline{t}} ^t ds \ e ^{-C_-(t-s)} s^{-3} }{(1+t)^{-3}}=C_b \notag
\end{equation}

where $ C_b $ is a constant, we finally obtain for $t>\overline{t }$, with $ \overline{t} $ sufficiently large, independent of $ \gamma $
\begin{equation}
V_{\infty}- V\geq \gamma e^{-C_-t}+ A_-\frac{\gamma ^4}{t^3} \notag
\end{equation}
which concludes the proof of Theorem 3.2.

\section{ Comments }
In this work we investigated the viscous friction acting on a body of concave shape immersed in a free gas.

We studied  the free Vlasov equation by means of characteristics, coupled with the ordinary differential equation describing the velocity of the body.
We split the action of the gas on the body, namely $ F(t) $, in the contribution
$ r^{\pm}(t) $ coming from recollisions.

The full problem was presented in Section 3. We studied it as a fixed point of the map (\ref{map}); we stress that our techniques, as those of the other articles on this subject, are perturbative and work only when the parameter
 $ \gamma =V_{\infty}- V_0 $ is finite but sufficiently small.
 
A key role in the analysis was played by the term $ s^* $ in (\ref{f(s*= h}).
This value was linked through Lemma 3.2 with the time that particles spent outside the body;
proving that it was bounded in time we essentially proved that there was a non zero measure of particles trapped in the cylinder all along its evolution (at least for $  \gamma $  sufficiently small) and this was essentially responsible for the slower rate of decay to the limiting velocity.

The estimate of $r^{\pm}(t)$ was carried out in detail in order to keep track of each contribution. In the end, in order to obtain the improvement on the lower bound of 
Theorem 3.2, we integrated the recollision term over particles producing a single recollision, indeed the main contribution to the friction comes from these very ones.\\

We remark that if the cylinder was placed the other way round, that is with its hollow base facing backwards, we could proceed as in \cite{nostro} for $ r^+ $ and in the same way of section (\ref{SECTION estimate r+}) for $ r^- $ leading to the usual $ t^{-5} $ power of decay.
It is natural, in fact, that the trapping effects vanish in the case of a cavity not pointed towards the motion of the body.
  
Some possible generalizations  of the present paper can be considered,
for example the case in which the external field is absent, namely $E=0$.  
As in \cite{E=0}, the evolution of the system is different from what intuition may suggest: assuming an initial value $V_0>0$, the velocity  reaches zero in a finite time $\overline{t}$ , then becomes negative  (thanks to recollision terms), and  finally it reaches zero from negative values.  Regarding the exact trend in time of the solution there are two different cases; if the concavity  is turned to the negative x-direction (hollow base facing backwards), we expect a $t^{-3}$ decay. Here is a sketch of the  proof: the contribution given by the back recollisions is bounded, for large $t$, in the following way:
   \begin{align*}
C_1 \frac{\gamma ^5}{t^3}\leq r_W^-(t)\leq C_2 \frac{(\gamma + A_1 \gamma ^3)^3}{(1+t)^3} 
\end{align*}
while the frontal recollision term can be estimated as
\begin{align*}
|r_W^+(t)| \leq C \frac{\gamma^{5+\frac{1}{4}}A_1^3}{(1+t)^5} \chi (t>\overline{t}).
\end{align*}
Clearly $r_W^-(t)$ is dominant and, using Duhamel formula, produces the $t^{-3}$ decay.
 On the other hand, if the concavity is turned to the positive x-direction, although there is a sort of trapping effect at large times , we are not able to express ourselves over the exact asymptotic behavior and a further investigation is needed.\\
 Another physically interesting case is the one in which  $V_0>V_{\infty}$. In this case we expect that $V(t)$ initially decreases and crosses the limiting value $V_{\infty}$ within a finite time $t_0$, then it reaches $V_{\infty}$ from below, with a $t^{-3}$ trend; indeed, there exists a time $s^*>t_0$ such that a fraction of particles colliding at $s>s^*$ remain trapped in the concavity. However, an estimate of the magnitude of $s^*$ depending on $\gamma$ and $h$ should require a finer analysis.
\\
 Another interesting case is that of a stochastic interaction between the concave body and the gas. More precisely, as already done in previous papers, we can assume that colliding particles are absorbed and immediately re-emitted with Maxwellian-distribuited velocities (see \cite{diffusivo}, \cite{strauss}, \cite{STRAUSS-2}):
\begin{align} \label{diffusive}
f_+(\textbf{x},\textbf{v},t)=\alpha J(\textbf{x},t) e ^{-\beta (v_x-V(t))^2}e^{-\beta v_\perp ^2}
\end{align}

 As a first study, one could impose diffusive boundary conditions only on the bottom of the hollow cylinder; the side, instead, could be still assumed adiabatic, so it simply acts as a lateral barrier which reflects elastically the gas particles. 
 Regardless of their colliding velocity, particles are now re-emitted, with high probability, at a velocity near to $V(t)$ (see (\ref{diffusive})) and this clearly favours their trapping due to lateral barriers. We emphasize that even particles colliding  at high speed, which would be swept away in the case of elastic collisions, can be subject to the trapping effect.
 So, the number of recollisions and the portion of trapped particles increases with respect to the elastic case and we expect that it directly influences the asymptotic behavior.
\\

As stressed in the introduction the aim of this work was to get a slightly more realistic perspective in the study of such friction problems, taking into account simple interactions between the body shape and the gas, another work in this direction is \cite{altro}, where the body is considered elastic and changes its length according to the interaction with particles. 
The next step would be to consider  more general kinds of concavity.   Nevertheless we would like to stress that even slight generalizations can give  rise to further complications. Consider for example the case of a body with tilted walls: even if constrained to a fixed velocity, the gas particles would bounce inside correlating through times different areas of the inner side of the body (in the cylinder case collisions with the lateral barriers didn't change particles momentum along the x axis); this gives rise to a time dependent friction for constant velocities (which was absent in all previous cases).
The time dependency of the friction is in fact linked to the recollision terms and in the case of fixed velocitiy for the body  all previous scenarios showed no recollisions, however this would not be the case anymore if the body presented tilted walls.

\appendix 
\section{Appendix: Gas Particles Kinematic}
\label{appendix collision}

We derive here collision conditions (\ref{v'}) and (\ref{v tilde}).
In what follows we denote by $M$ and $V$ the mass and velocity of the body and by $m$ and $\bm{v}$ mass and velocity of a particle which will be assumed to collide elastically with the body.
We start with lateral collisions.\\
The expression of the side surface of our body, and its inner normal are
\footnote{ $\bm{ \nabla_{\perp}  } = (\partial _y, \partial _z)  $}:
\begin{gather}%%%%%%%%%%%%%%%%%%%%%%%%%%%%%%%%%%%%%%%%%%%%%%%%%%%%%%%%%%%%%%%%%%%%%%%%%
g(\bm{x_{\perp}}) = |\bm{x_{\perp}}|^2- R^2 = 0 \\
\hat{\bm{n}_S} =(0 , - \bm{ \nabla_{\perp}  }\,g )\frac{1}{| \bm{ \nabla_{\perp}  }\,g |} =  
(0 , - \bm{x_{\perp} } )\frac{1}{R}.
\end{gather}%%%%%%%%%%%%%%%%%%%%%%%%%%%%%%%%%%%%%%%%%%%%%%%%%%%%%%%%%%%%%%%%%%%%%%
Let now  $\bm{F}^b $ be the impulsive force acting on $ S(t)  $ during a collision with a gas particle, conversely let $\bm{F}^g $  be the impulsive force that the gas particle undergoes during this collision, of course by Newton third law they are opposites. Moreover, being a constraint force, $\bm{F}^g $ is perpendicular to the surface
(in what follows the upper sign will refer to inner collisions, the lower to outer ones):
\begin{gather}%%%%%%%%%%%%%%%%%%%%%%%%%%%%%%%%%%%%%%%%%%%%%%%%%%%%%%%%%%%%%%%%%%%%%%
\label{azione reazione}
\bm{F}^b(s)= - \bm{F}^g(s) \\
\bm{F}^g = \pm |\bm{F}^g| \hat{\bm{n}}_S
\end{gather}%%%%%%%%%%%%%%%%%%%%%%%%%%%%%%%%%%%%%%%%%%%%%%%%%%%%%%%%%%%%%%%%%%%%%%

In particular we consider that the particle hits the body  in $ (x,\bm{x_{\perp} }) \in  S(t) $ at time $ t $ and define its pre(post)collisional velocity as $ \bm{v} $ $  ( \widetilde{\bm{v}} )$;
the momentum change of the body $(\bm{P}^b)$ due to this collision is:
\begin{gather}%%%%%%%%%%%%%%%%%%%%%%%%%%%%%%%%%%%%%%%%%%%%%%%%%%%%%%%%%%%%%%%%%%%%%%%
\frac{d \bm{P}^b}{ds}(s)= \mp |\bm{F}^g(s)| \hat{\bm{n}_S} 
\end{gather}%%%%%%%%%%%%%%%%%%%%%%%%%%%%%%%%%%%%%%%%%%%%%%%%%%%%%%%%%%%%%%%%%%%%%%
and by components:
\begin{gather}%%%%%%%%%%%%%%%%%%%%%%%%%%%%%%%%%%%%%%%%%%%%%%%%%%%%%%%%%%%%%%%%%%%%%%
\label{dPx^b = 0}
\frac{d P^{b}_{x} }{ds}(s)= 0 \\
\frac{d \bm{P}^{b}_{\perp}}{ds}(s)= - \bm{F}^g(s)= \pm |\bm{F}^g(s)|\, \bm{x_{\perp} } \frac{1}{R}.
\end{gather}%%%%%%%%%%%%%%%%%%%%%%%%%%%%%%%%%%%%%%%%%%%%%%%%%%%%%%%%%%%%%%%%%%%%%%

%(\ref{v tilde}).
Even if the second of these equations represents a variation of momentum along  $ \bm{x}_{\perp} $ axis due to a single collision, the gas distribution is invariant in this direction (the system clearly possesses such invariance) acting in an homogeneous way around the surface $ S(t) $, therefore the overall action of the gas has a null effect on  the momentum   along  the $ x_{\perp} $ axis \footnote{In any case the body is supposed constrained along the $x$ direction.} .

On the other hand the system lacks symmetry along x-axis (the very displacement of the cylinder being in this direction), in this sense eq (\ref{dPx^b = 0}) guarantees that lateral collisions don't change the velocity of the body.

Focusing our attention on the particle, by virtue of (\ref{azione reazione}) we have:
\begin{gather}
\frac{d P^{g}_{x} }{ds}(s)= 0 \\
\frac{d \bm{P}^{g}_{\perp}}{ds}(s)= \mp |\bm{F}^g(s)|\, \bm{x_{\perp} } \frac{1}{R}
\end{gather}
and integrating the above relations in an interval of time $ [t-\epsilon, t+\epsilon] $ such that it contains only the collision at  $ t $, we obtain the total variation:
\begin{gather}
\label{append vx tilde-vx =0}
m(\widetilde{v_x}-v_x)=0 \\
\label{append m (vperp tilde -v)}
m( \widetilde{\bm{v}}_{\perp} -\bm{v_{\perp} } )=\mp \frac{I}{R} \bm{x_{\perp} } \ ; \quad 
I = \int_{t-\epsilon}^{t+ \epsilon} |\bm{F}^g(s)|\, ds,
\end{gather}
To complete the derivation we use the conservation of kinetic energy of the system, remembering that the body doesn't change its velocity for lateral collision and using eq.(\ref{append vx tilde-vx =0}):
\begin{gather}
\label{append |vperp tilde| =|vperp|)}
|\widetilde{\bm{v}}_{\perp}| =|\bm{v}_{\perp}|
\end{gather}
Finally taking the norm of eq.(\ref{append m (vperp tilde -v)}) and exploiting eq.(\ref{append |vperp tilde| =|vperp|)}) we get:
\begin{gather}
|\widetilde{\bm{v}}_{\perp}|^2 = |\bm{v}_{\perp} \mp \frac{I}{mR}\bm{x}_{\perp} |^2 =
 |\bm{v}_{\perp}|^2 + (\frac{I^2R}{m^2}) \mp 2 \frac{I}{mR} \bm{v}_{\perp} \cdot \bm{x}_{\perp}
\end{gather}
arriving to :
\begin{gather}
I =  \pm 2m |\bm{v}_{\perp}|\, \cos(\,\theta( \bm{v}_{\perp},\bm{x}_{\perp} )\,),
\end{gather}
where $ \theta(\bm{v}_{\perp},\bm{x}_{\perp} )  $ is the angle between $\bm{v}_{\perp} $ and $ \bm{x}_{\perp}$;
note that $ I $  is always positive, indeed 
$\theta(\bm{v}_{\perp},\bm{x}_{\perp} ) \in (0,\frac{\pi}{2}) $ for inner collisions and
 $ \theta(\bm{v}_{\perp},\bm{x}_{\perp} ) \in (\frac{\pi}{2},\pi) $ for outer ones. 
 
Combining the expression of $ I $ with eq.(\ref{append vx tilde-vx =0}) and 
eq.(\ref{append m (vperp tilde -v)}) we obtain
\begin{subequations}
\begin{gather}
\widetilde{v_x}\,=\,v_x \\
\widetilde{\bm{v}}_{\perp} = \bm{v}_{\perp} - 2 |\bm{v}_{\perp}|\, \cos(\,\theta( \bm{v}_{\perp},\bm{x}_{\perp} ))\hat{\bm{x}}_{\perp}.
\end{gather}
\end{subequations}
\\
We now turn our attention to collisions with the base of our body, remembering that it is orthogonal to the x-axis.
In particular we consider that the particle hits the body in $ (x,\bm{x_{\perp} }) \in  D(t)  $ at time $ t $ and define its pre(post)collisional velocity as $ \bm{v} $ $  ( \bm{v}' )$ Proceeding similarly as before, conservation of momentum and  kinetic energy imply for the body:
\begin{equation}
V'=V+\frac{2m}{M+m}(v_x-V)   \\ 
\end{equation} 
where $V'$ and $v_x'$ are post-collisional velocities.\\
While for the particle:
\begin{gather}
v_x'=V-\frac{M-m}{M+m}(v_x-V) \\
\bm{v}_{\perp}' =\bm{v}_{\perp} 
\end{gather} 

Being $M>>m$, we have
\begin{gather}
V' \simeq V+\frac{2m}{M}(v_x-V)  
\end{gather}
and eq.(\ref{v'})
\begin{subequations}
\begin{gather}
v_x'\simeq 2V-v_x \\
\bm{v}_{\perp}' = \bm{v}_{\perp}.
\end{gather} 
\end{subequations}

\section{Appendix:   Properties of $ F_{\zeroped{0}}(V)$}
\label{appendix F_0(V)}
\begin{Lems}
$F_0(V)$ is an odd function and for $ V>0 $ it is positive, increasing and convex.
\end{Lems}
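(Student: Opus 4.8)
The plan is to reduce the whole statement to one explicit closed form for $F_0$ and then read off the four properties from elementary facts about the Gaussian primitive $\Phi(V):=\int_0^V e^{-\beta u^2}\,du$, which is odd, strictly increasing, and strictly positive on $(0,\infty)$.

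First I would start from the one-dimensional expression for $F_0(V)$ (the transverse Gaussian integral already having been absorbed into the positive constant $A$), expand $(v_x-V)^2=v_x^2-2Vv_x+V^2$, and split the two integrals into half-line moments $\int_{-\infty}^{V}v_x^{k}e^{-\beta v_x^{2}}\,dv_x$ and $\int_{V}^{\infty}v_x^{k}e^{-\beta v_x^{2}}\,dv_x$ for $k=0,1,2$. These are all elementary: the $k=1$ moments equal $\mp\frac{1}{2\beta}e^{-\beta V^{2}}$; the $k=2$ moments follow from a single integration by parts, using that $v_x e^{-\beta v_x^2}$ has primitive $-\frac{1}{2\beta}e^{-\beta v_x^2}$; and the difference of the $k=0$ moments is $2\Phi(V)$ by symmetry of the Gaussian. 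Collecting terms (two cancellations occur) yields
\begin{equation*}
F_0(V)=A\Bigl[\frac{V}{\beta}e^{-\beta V^{2}}+\Bigl(\frac{1}{\beta}+2V^{2}\Bigr)\Phi(V)\Bigr].
\end{equation*}

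From this form everything is immediate. Oddness: $V\mapsto Ve^{-\beta V^{2}}$ and $\Phi$ are odd while $\frac{1}{\beta}+2V^{2}$ is even, so $F_0$ is odd. Positivity for $V>0$: both summands are products of strictly positive quantities, since $\Phi(V)>0$ there. For monotonicity and convexity I would differentiate the closed form twice; the $V^{2}e^{-\beta V^{2}}$ terms cancel each time, leaving
\begin{equation*}
F_0'(V)=A\Bigl[\frac{2}{\beta}e^{-\beta V^{2}}+4V\Phi(V)\Bigr],\qquad F_0''(V)=4A\,\Phi(V).
\end{equation*}
Since $V\Phi(V)\ge 0$ for every $V$ and $\Phi(V)>0$ for $V>0$, we get $F_0'(V)>0$ (in fact on all of $\mathbb{R}$) and $F_0''(V)>0$ on $(0,\infty)$, i.e.\ $F_0$ is increasing and convex there; this is also consistent with oddness (concave for $V<0$).

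There is no genuine obstacle here --- the statement is a computation --- and the only point that makes it painless is integrating the $k=2$ moment by parts so that it re-expresses through $\Phi$ and $e^{-\beta V^{2}}$, after which the cancellations in $F_0'$ and $F_0''$ are automatic. If one prefers to avoid the closed form, positivity also drops out of the substitutions $v_x=V\mp w$ ($w>0$) in the two half-line integrals, giving $F_0(V)/A=\int_0^{\infty}w^{2}\bigl(e^{-\beta(V-w)^{2}}-e^{-\beta(V+w)^{2}}\bigr)\,dw$, whose integrand is strictly positive for $V>0$ because $(V+w)^2-(V-w)^2=4Vw>0$; monotonicity and convexity could then be obtained by differentiating under this integral sign, but the explicit formula is cleaner and gives convexity essentially for free.
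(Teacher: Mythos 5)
Your proposal is correct, and the closed form you compute is indeed right: I checked the moments $I_k^\pm(V)=\int v_x^k e^{-\beta v_x^2}\,dv_x$ over the two half-lines and they assemble into
$F_0(V)=A\bigl[\tfrac{V}{\beta}e^{-\beta V^2}+(\tfrac{1}{\beta}+2V^2)\Phi(V)\bigr]$, with $F_0'(V)=A\bigl[\tfrac{2}{\beta}e^{-\beta V^2}+4V\Phi(V)\bigr]$ and $F_0''(V)=4A\Phi(V)$; all four stated properties then follow by inspection. Your route differs from the paper's: the paper never produces a closed form, but works directly on the defining integrals. For positivity it reflects $v_x\mapsto -v_x$ in the second integral and regroups so that $F_0(V)/A=\int_{-\infty}^{-V}(-4Vv_x)e^{-\beta v_x^2}\,dv_x+\int_{-V}^{V}(v_x-V)^2e^{-\beta v_x^2}\,dv_x$, both pieces manifestly nonnegative; for monotonicity and convexity it differentiates under the integral sign via Leibniz (the boundary term vanishes because $(v_x-V)^2$ and its $V$-derivative vanish at $v_x=V$), obtaining $F_0'(V)=2A\bigl[\int_{-\infty}^V(V-v_x)e^{-\beta v_x^2}+\int_V^\infty(v_x-V)e^{-\beta v_x^2}\bigr]$ and $F_0''(V)=2A\int_{-V}^{V}e^{-\beta v_x^2}\,dv_x$, with positivity again read off from nonnegative integrands; and oddness by the same reflection. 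The two arguments are essentially the same elementary calculus repackaged: your explicit formula makes all four claims simultaneous one-liners and, as a bonus, exhibits $F_0$ in terms of the error function, while the paper's version avoids committing to the closed form and never needs the $\Phi$ notation. Your second, substitution-based remark $F_0(V)/A=\int_0^\infty w^2\bigl(e^{-\beta(V-w)^2}-e^{-\beta(V+w)^2}\bigr)\,dw$ is in fact closest in spirit to the paper's positivity argument (a reflection to expose a sign-definite integrand), and is arguably the slickest way to get oddness and positivity in one breath, though as you note it is less convenient for convexity than the closed form.
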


\begin{proof}
In the following $ V >0 $. By the change of variable $v_x \to -v_x$, 
\begin{align}
&F_0(V)=A\   \ \biggl [ \int _{-\infty} ^V dv_x (v_x -V)^2 e^{- \beta v_x ^2} -\int_{-\infty} ^{-V} dv_x (v_x +V)^2  e^{- \beta v_x ^2} \biggl ] \notag  \\
&= A\  \biggl [\int _{-\infty} ^{-V} dv_x (-4Vv_x) +\int _{-V} ^{V} dv_x (v_x -V)^2 e^{- \beta v_x ^2} \biggr ] \notag \\
&\geq A\  \int _{-\infty} ^{-V} dv_x (-4Vv_x) \geq 0 \notag
\end{align}

The first derivative: 
\begin{gather*} 
F_0'(V)= A\   \ \biggl [ \int _{-\infty} ^V dv_x  \ 2(V-v_x) e^{- \beta v_x ^2} - \int_{V} ^{\infty} dv_x \ 2(V-v_x)  e^{- \beta v_x ^2} \biggl ] = \\
2 A\   \ \biggl [ \int _{-\infty} ^V dv_x  \ (V-v_x) e^{- \beta v_x ^2} + \int_{V} ^{\infty} dv_x \ (v_x-V)  e^{- \beta v_x ^2} \biggl ]
\end{gather*}
is clearly positive .\\
The second derivative
\begin{equation}
F_0 ''(V)=2C \biggl [ \int _{-\infty} ^V dv_x e^{- \beta v_x ^2}- \int _{-\infty} ^{-V} dv_x e^{- \beta v_x ^2}  \biggl ] >0 \notag
\end{equation}
\\
In the end,
\begin{equation}
F_0(-V)=N   \ \biggl [ \int _{-\infty}^{-V} dv_x (v_x +V)^2 e^{- \beta v_x ^2} -\int_{- V}^{\infty} dv_x (v_x +V)^2  e^{- \beta v_x ^2} \biggr ] \notag
\end{equation}
By changing $v_x \to -v_x$ 
\begin{equation}
F_0(-V)=C   \ \biggl [\int_ V^{\infty} dv_x (v_x -V)^2  e^{- \beta v_x ^2}  -\int _{-\infty}^V dv_x (v_x -V)^2 e^{- \beta v_x ^2} \biggr ]=-F_0(V) \notag
\end{equation}
\end{proof}

\emph{Acknowledgments}  \,
We thank Carlo Marchioro and Guido Cavallaro for suggesting the problem and for their important observations and remarks during many discussions along the work.
\\
\
\\


\begin{thebibliography}{99}


\bibitem{diffusivo}
K. Aoki, G. Cavallaro, C. Marchioro, and M. Pulvirenti, 
\emph{ On the motion of a body in thermal equilibrium immersed in a perfect gas},
Math. Model. Numer. Anal. $ \bm{42}  $, 263-275 (2008).

\bibitem{NUMERICO 2}
K. Aoki, T. Tsuji, and G. Cavallaro, \emph{Approach to steady motion of a
plate moving in a free-molecular gas under a constant external force },
Phys. Rev. E $ \bm{80}  $, 016309, 1-13 (2009).

\bibitem{nostro}
S. Caprino, C. Marchioro, and M. Pulvirenti, \emph{Approach to equilibrium 
in a microscopic model of friction},
Comm. Math. Phys. $ \bm{264}  $ , 167-189 (2006).

\bibitem{E=0}
S. Caprino, G. Cavallaro, and C. Marchioro, \emph{On a microscopic
model of viscous friction}, Math. Models Methods Appl. Sci. $ \bm{17} $, 1369-1403 (2007).

\bibitem{CONVEX}
G. Cavallaro,	\emph{On the motion of a convex body interacting with a perfect
gas in the mean-field approximation}, Rend. Mat. Appl. $ \bm{27 } $, 123-145 (2007).

\bibitem{strauss}
X. Chen, W. Strauss, 
\emph{ Approach to Equilibrium of a Body Colliding Specularly and Diffusely with a Sea of Particles},
Arch. Rational Mech. Anal. $ \bm{211 } $, 879-910  (2014). 

\bibitem{CERCIGNANI}
C. Cercignani,	\emph{The Boltzmann Equation and Its Applications}, 
Applied Mathematical Science $ \bm{67} $, New York: Springer Verlag, (1988).

\bibitem{BLA}
R.L. Dobrushin, \emph{Vlasov equations}, Sov. J. Funct. Anal. $ \bm{13} $, 115-123
(1979).

\bibitem{STRAUSS-2}
X. Chen, W. Strauss, \emph{Velocity Reversal Criterion of a Body Immersed in a Sea of Particles}, 23 pp,  arXiv: 1404.0652

\bibitem{piston}
Ch. Gruber, J. Piasecki, \emph{Stationary motion of the adiabatic piston}, Physica 
 A $ \bm{268} $ , 412-423 (1999).


\bibitem{altro}
G. Cavallaro,	 C. Marchioro, \emph{On the motion of an elastic body in a free gas}, 
Reports in Mathematical Physics $ \bm{69 } $,  251-264 (2012).

\bibitem{NUMERICO 1}
T. Tsuji and K. Aoki, \emph{Decay of a linear pendulum in a free-molecular
gas and a special Lorentz gas, }  J. Stat. Phys. $ \bm{146(3)}$  , 620-645 (2012).



\end{thebibliography}
\end{document}